\newtheorem{corollary}{Corollary}
\newtheorem{lemma}{Lemma}
\newtheorem{definition}{Definition}
\newtheorem{proposition}{Proposition}
\DeclareMathOperator{\e}{e}
\newcommand{\kb}[2]{\left\vert#1\right\rangle\left\langle#2\right\vert}
\newcommand{\ket}[1]{\left\vert #1 \right>}
\newcommand{\modu}[1]{\left| #1 \right|}
\newcommand{\bkmv}[3]{\left< #1 \right\vert#2 \left\vert #3 \right>}
\title{\bf Entanglement in fermion systems and quantum metrology}
\author{F. Benatti$^{a,b}$, 
R. Floreanini$^{b}$, U. Marzolino$^{c}$\\
\\
\small ${}^a$Dipartimento di Fisica, Universit\`a di Trieste, 
34151 Trieste, Italy\\
\small ${}^b$Istituto Nazionale di Fisica Nucleare, Sezione di Trieste,
34151 Trieste, Italy\\
\small ${}^c$Department of Physics, University of Ljubljana, 1000 Ljubljana, Slovenia}
\date{\null}
\begin{document}

\maketitle

\begin{abstract}
\noindent
Entanglement in fermion many-body systems is studied using 
a generalized definition of separability based on partitions of the 
set of observables, rather than on particle tensor products. In this way,
the characterizing properties of non-separable fermion states can be explicitly analyzed,
allowing a precise description of the geometric structure of the corresponding state space.
These results have direct applications in fermion quantum metrology:
sub-shot noise accuracy in parameter estimation can be obtained without the need of a preliminary
state entangling operation.

\end{abstract}

\section{Introduction}

In trying to apply the standard definition of separability and entanglement to systems of identical
particles, one immediately faces a problem: the indistinguishability of the system constituents
conflicts with Hilbert space tensor product structures on which these notions are based.
The point is that the particles are identical and therefore they can be neither singly addressed, 
nor their individual properties measured:
only collective, global system operators are in fact admissible, 
experimentally accessible observables \cite{Feynman,Sakurai}.

This observation unavoidably leads to a radical change in perspective concerning the attitude towards the notion of
entanglement in general: the presence of quantum correlations in any physical system is less signaled
by {\it a priori} properties of the system states, 
than by those of the algebra of the system observables 
and by the behaviour of the associated correlation functions. In other terms, the usually adopted 
definition of separability based on the particle aspect of first quantization appears to be too restrictive,
leading possibly to misleading results; rather, it should be replaced by one directly emerging from 
the second quantized description, usually adopted for studying many-body systems.%
\footnote{Entanglement
in many-body systems has been widely discussed in the recent literature, {\it e.g.} see
\cite{Schliemann}-\cite{Modi}; however, for the reasons just pointed out,
only a limited part of those results are really applicable to the case of identical particle systems.}

This new approach to separability and entanglement has been advocated before \cite{Zanardi}-\cite{Viola2}, 
but formalized only recently \cite{Benatti1}-\cite{Marzolino}. 
So far the focus has been on bosonic systems, with particular attention to bipartite entanglement,
aiming at specific applications to quantum metrology. 
Suitable criteria able to detect non-classical
correlations in systems with a fixed number of elementary bosonic constituents
have been discussed. In particular, 
it has been found that in general 
the operation of partial transposition \cite{Peres,Horodecki2} gives rise to a much more exhaustive
criterion for detecting bipartite entanglement than in the case of 
distinguishable particles \cite{Argentieri,Benatti3}. 
This allows obtaining a rather complete classification
of the structure of bipartite entangled states in systems composed by a fixed
number of bosons \cite{Benatti3,Benatti4}.
Further, entangled bosonic states turn out to be much more robust 
than distinguishable particle ones against
mixing with other states and an explicit expression for the so-called
``robustness'' \cite{Vidal,Steiner} has been derived \cite{Benatti4}.
In this way, a general characterization of the geometry of the space of
bosonic states can be given, that is indeed much richer than in the case of systems
of distinguishable constituents.

In the following, we shall extend the study of the notions of separability and entanglement
to the case of systems composed of fermions following the lines previously adopted for bosons. In this case,
the elementary creation and annihilation operators associated
with the fermion system constituents satisfy an algebra given in terms
of anti-commutation relations: this poses
new questions regarding the connection between the properties of locality and commutativity
of the system observables, making the theory of fermion entanglement even richer
than in the case of bosonic systems. 

Application to quantum metrology using fermion systems will also be discussed:
as in the case of bosonic systems, also in the case of fermion ones it will be
explicitly shown that sub-shot noise accuracy in parameter estimation can be
achieved without the need of feeding the measuring apparatus with entangled states;
the required non-locality can be provided by the apparatus itself.
These results may have practical implications in interferometric experiments
using ultracold fermion gases.

\section{Entanglement in multimode fermion systems}

We shall consider generic fermion many-body systems made of $N$
elementary constituents that can occupy $M$ different orthogonal states or modes, $N<M$. This is a quite general
model that can accommodate various physical situations in atomic and condensed matter physics;
in particular, it can be used to describe the behavior of ultracold gases
confined in multi-site optical lattices, that are becoming so relevant in the study of
quantum many-body phenomena ({\it e.g.}, see
\cite{Shi}-\cite{Amico}, \cite{Leggett1}-\cite{Yukalov} and references therein).

A many-body system made of identical particles is usually described
by means of creation and annihilation 
operators, $a_i^\dagger$, $a_i$, for each of the $M$ modes that the particles can occupy,
$i=1, 2,\ldots,M$ \cite{Strocchi,Thirring}.
Fermion particles are characterized by the fact that the operators $a_i^\dagger$, $a_i$,
obey the canonical anti-commutation relations, 
\begin{equation}
\{a_i,\,a^\dagger_j\}\equiv a_i\,a^\dagger_j+a^\dagger_j\, a_i=\delta_{ij}\ ,\quad\{a_i,\,a_j\}=
\{a_i^\dagger,\,a^\dagger_j\}=\, 0\ .
\label{1}
\end{equation}
The total Hilbert space $\cal H$ of the system
is then spanned by the many-body Fock states, obtained by applying creation operators to the
vacuum:
\begin{equation}
|n_1, n_2,\ldots,n_M\rangle= 
(a_1^\dagger)^{n_1}\, (a_2^\dagger)^{n_2}\, \cdots\, (a_M^\dagger)^{n_M}\,|0\rangle\ ,
\label{2}
\end{equation}
the integers $n_1, n_2, \ldots, n_M$ representing the occupation numbers of the different modes;
due to (\ref{1}), they can take only the two values 0 or 1.  
Since the number of fermions $N$ is fixed,
the total number operator $\sum_{i=1}^M a_i^\dagger a_i$
is a conserved quantity and the occupation numbers must satisfy the 
additional constraint $\sum_{i=1}^M n_i=N$; in other words, all states must
contain exactly $N$ particles and the dimension $D$ of the system Hilbert space $\cal H$
is then: $D={M\choose N}$.
In addition, the set of polynomials in all creation and annihilation operators,
$\{a^\dagger_i,\, a_i\, |\,i=1,2,\ldots, M\}$,
form an algebra that, together with its norm-closure, coincides with the algebra
${\cal A}({\cal H})$ of bounded operators acting on $\cal H$; the observables of the systems are part of this algebra.

As mentioned in the introductory remarks, in this framework the natural interpretation of entanglement in terms
of particle correlations has to be rethought. For instance, 
in the case of a system composed by two standard, distinguishable qubits, the natural Hilbert space product 
structure ${\cal H}=\mathbb{C}^2\otimes\mathbb{C}^2$ and 
the corresponding algebraic product structure for the space of the associated observables
${\cal A}=M_2(\mathbb{C})\otimes M_2(\mathbb{C})$, with $M_2(\mathbb{C})$ the set of $2\times2$ complex matrices, immediately identify the local observables as the one taking the form
\begin{equation}
\label{3}
A\otimes B=(A\otimes 1)\,(1\otimes B)\ ,
\end{equation}
where $A$ is an observable of the first qubit, while $B$ that for the second one.
In other terms, local observables for the two-qubit systems are characterized by being tensor products of observables pertaining each to one of the two parties: they commute and are thus algebraically independent.

Consider instead a system composed by two fermions that can occupy two modes, and thus described
by the set of operators $(a_1, a_1^\dagger, a_2, a_2^\dagger)$: 
the single particle Hilbert space is still $\mathbb{C}^2$; the difference with respect to the qubit case is that 
the total Hilbert space $\cal H$ is now one-dimensional, containing just one Fock vector, namely:
$a^\dag_1a^\dag_2\vert 0\rangle$. In the language of first quantization, this correspond to the fact
that only anti-symmetric states are allowed due to the Fermi statistics, and this is automatically
enforced in the second quantized language due to the algebra in (\ref{1}).
Further, the algebra $\cal A$ of operators is linearly generated by the identity 
together with at most second order monomials in $a_1,a^\dag_1$ and $a_2,a_2^\dag$.
In this case, the particle Hilbert space tensor product structure is lost,
reflecting the fact that the two particles are indistinguishable.
Similarly, also the usual notion of local observables, the one
based on the tensor product structure as in (\ref{3}), is no longer available and 
need to be reformulated.

In dealing with systems of identical particles,
it is natural to define the notion
of bipartite entanglement by the presence of non-classical correlations among 
averages of operators.
It is then convenient to start with the following general definitions, 
valid for both boson and fermion systems:

\begin{definition}
An {\bf algebraic bipartition} of the operator algebra ${\cal A}({\cal H})$ is any pair
$({\cal A}_1, {\cal A}_2)$ of subalgebras of ${\cal A}({\cal H})$ generated by disjoint
subsets of modes, namely
${\cal A}_1, {\cal A}_2\subset {\cal A}({\cal H})$, ${\cal A}_1 \cap {\cal A}_2={\bf 1}$.
\end{definition}

\noindent
In general the two subalgebras ${\cal A}_1$ and ${\cal A}_2$ need not reproduce the whole
algebra ${\cal A}({\cal H})$, {\it i.e.} 
\hbox{${\cal A}_1 \cup {\cal A}_2\subset {\cal A}({\cal H})$};
however, in the cases of partitions defined in terms of modes, as discussed below,
one has: ${\cal A}_1 \cup {\cal A}_2={\cal A}({\cal H})$.

Any algebraic bipartition encodes in a natural way the definition of the system local observables:
 
\medskip
\noindent
\begin{definition}
An element (operator) of ${\cal A}({\cal H})$ is said to be 
$({\cal A}_1, {\cal A}_2)$-{\bf local}, {\it i.e.} local with respect to
a given bipartition $({\cal A}_1, {\cal A}_2)$, if it is the product $A_1 A_2$ of an element 
$A_1$ of ${\cal A}_1$ and another $A_2$ in ${\cal A}_2$.
\end{definition}

From this notion of operator locality, a natural definition of state separability and entanglement
follows \cite{Benatti1}:

\medskip
\noindent
\begin{definition}
A state $\omega$ on the algebra ${\cal A}({\cal H})$ will be called {\bf separable} with
respect to the bipartition $({\cal A}_1, {\cal A}_2)$ if the expectation $\omega(A_1 A_2)$ 
of any local operator $A_1 A_2$ can be decomposed into a linear convex combination of
products of expectations:
\begin{equation}
\omega(A_1 A_2)=\sum_k\lambda_k\, \omega_k^{(1)}(A_1)\, \omega_k^{(2)}(A_2)\ ,\qquad
\lambda_k\geq0\ ,\quad \sum_k\lambda_k=1\ ,
\label{4}
\end{equation}
where $\omega_k^{(1)}$ and $\omega_k^{(2)}$ are given states on ${\cal A}({\cal H})$;
otherwise the state $\omega$ is said to be {\bf entangled} with respect the bipartition
$({\cal A}_1, {\cal A}_2)$.%
\footnote{In general, a state $\omega$ is a normalized, positive, linear functional on ${\cal A}({\cal H})$, such that
the average of any observable $\cal O$ can be expressed as the value taken by $\omega$ on it,
$\langle {\cal O}\rangle=\omega({\cal O})$; a standard representation of this expectation value map
is given by the trace operation over density matrices.}
\end{definition}

\noindent
{\bf Remark 1:} {\sl i)} This generalized definition of separability 
can be easily extended to the case
of more than two partitions, by an appropriate, straightforward generalization; 
specifically, in the case of an $n$-partition, Eq.(\ref{4})
would extend to:
\begin{equation}
\omega(A_1 A_2\cdots A_n)=\sum_k\lambda_k\, \omega_k^{(1)}(A_1)\, \omega_k^{(2)}(A_2)\cdots
\omega_k^{(n)}(A_n)\,\ ,\quad
\lambda_k\geq0\ ,\quad \sum_k\lambda_k=1\ .
\label{5}
\end{equation}

\noindent
{\sl ii)} As already observed before, in systems of identical particles
there is no {\it a priori} given, natural partition to be used for the definition
of separability; therefore,
issues about entanglement and non-locality 
are meaningful {\it only} with reference to a choice of a specific partition in the operator algebra
\cite{Zanardi}-\cite{Benatti5}; 
this general observation, often overlooked, is at the origin of
much confusion in the recent literature.\hfill$\Box$

\smallskip

\noindent

In the language of second quantization introduced before for the description of fermion systems, 
these general definitions can be made more explicit.
A bipartition
of the $M$-modes fermion algebra ${\cal A}({\cal H})$ can be given by splitting the collection
of creation and annihilation operators into two disjoint sets,
$\{a_i^\dagger,\, a_i\, | i=1,2\ldots,m\}$ and 
$\{a_j^\dagger,\, a_j,\, |\, j=m+1,m+2,\ldots,M\}$; 
it is thus uniquely determined by
the choice of the integer $m$, with $0\leq m \leq M$.%
\footnote{There is no loss of generality 
in assuming the modes forming the two partitions to be contiguous; 
if in the chosen bipartition this is not the case, 
one can always re-label the modes in such a way to achieve this convenient ordering.}

All polynomials in the first set (together with their norm-closures)
form a subalgebra ${\cal A}_1$, while the remaining set analogously generates
a subalgebra ${\cal A}_2$. 
Due to the anti-commutation relations (\ref{1}), the two sub-algebras ${\cal A}_1$,
${\cal A}_2$ do not in general commute. Nevertheless, from the algebraic relations
\begin{equation}
\nonumber
[AB\,,\,C]\,=\,A\,\{B\,,\,C\}-\{A\,,\,C\}\,B\ ,
\end{equation}
it follows that all even powers of elements in $\mathcal{A}_1$ ($\mathcal{A}_2$) 
commute with all elements of $\mathcal{A}_2$ ($\mathcal{A}_1$).
It is then convenient to introduce the following definition:

\begin{definition}
Let $\Theta$ be the automorphism on the Fermi algebra ${\cal A}$
defined by $\Theta(a_i)=-a_i$, $\Theta(a_i^\dagger)=-a_i^\dagger$ 
for all $a_i,\ a^\dagger_i\in{\cal A}({\cal H})$.%
\footnote{In other terms, $\Theta$ is a linear map on ${\cal A}$ 
preserving the algebra relations, {\it i.e.} $\Theta(AB)=\Theta(A)\Theta(B)$.}
The even component $\mathcal{A}^e$ of $\mathcal{A}$ is the subset of elements $A^e\in\mathcal{A}$ 
such that $\Theta(A^e)=A^e$, while the odd component $\mathcal{A}^o$ of $\mathcal{A}$ consists 
of those elements $A^o\in\mathcal{A}$ such that $\Theta(A^o)=-A^o$.
\end{definition}

\noindent
Notice that the even component $\mathcal{A}^e$ is the algebra generated by even polynomials 
in creation and annihilation operators, while the odd component $\mathcal{A}^o$ is just a linear space, 
but not an algebra, since the product of two odd elements is even.

Similarly, given the algebraic bipartition $(\mathcal{A}_1,\, \mathcal{A}_2)$,
one can define the even $\mathcal{A}_i^e$ and odd $\mathcal{A}_i^o$ components of the two
subalgebras $\mathcal{A}_i$, $i=1,2$. Only the operators of the first partition 
belonging to the even component $\mathcal{A}_1^e$
commute with any operator of the second partition and, similarly, only the even operators
of the second partition commute with the whole subalgebra $\mathcal{A}_1$.

Coming now back to the notion of separability introduced in {\sl Definition 3},
one may notice that there is a difference between bosonic and fermionic systems.
In the bosonic case, the two subalgebras 
${\cal A}_1$, ${\cal A}_2$ defining the algebraic bipartition $({\cal A}_1,\, {\cal A}_2)$
naturally commute, {\it i.e.} that each element $A_1$ of the operator algebra ${\cal A}_1$ 
commutes with any element $A_2$ in ${\cal A}_2$.
Instead, in the case of fermion systems, as already observed 
the two subalgebras ${\cal A}_1$, ${\cal A}_2$ do not in general commute.
Nevertheless, in such systems only selfadjoint operators belonging to the even components $A_1^e$, $A_2^e$ 
qualify as physical observables and these do commute.

At this point, two different attitudes are possible regarding the definition of separability
expressed by the condition (\ref{4}): {\it i)} use in it all operators from the two subalgebras
${\cal A}_1$, ${\cal A}_2$, as assumed in {\sl Definition 3} above; {\it ii)} restrict all
considerations to observables only. The first approach is in line with the notion of
``microcausality'' adopted in {\sl constructive} quantum field theory \cite{Streater,Strocchi2},
where the emphasis is on quantum fields, which are required either to commute (boson fields)
or anticommute (fermion fields) if defined on (causally) disjoint regions. 
On the other hand, the second point of view reminds
of the notion of ``local commutativity'' in {\sl algebraic} quantum field theory \cite{Emch,Haag},
where only observables are considered, assumed to commute if localized in disjoint regions.

These two points of view are not equivalent, as it can be appreciated by the following
simple example. Let us consider the system consisting of just one fermion that can occupy
two modes, {\it i.e.} $N=1$, $M=2$, with the bipartition defined by the two modes. 
The following state:
\begin{equation}
\omega=|\phi\rangle\langle \phi|\ ,\qquad 
|\phi\rangle=\frac{|1,0\rangle + |0,1\rangle}{\sqrt2}\ ,
\label{6}
\end{equation}
combination of the two Fock basis states $|1,0\rangle$, $|0,1\rangle$ introduced in (\ref{2}),
appears to be entangled.%
\footnote{As we shall see in Section 4, a $N$-fermion generalization of this state can be used
in quantum metrology to achieve sub shot-noise accuracy in parameter estimation.}
Nevertheless,
in the second approach mentioned above, it is found to satisfy the condition (\ref{4}), hence to be separable. 
Indeed, only observables, {\it i.e.} selfadjoint, even operators,
can be used in this case as $A_1$ and $A_2$; in practice, only the two partial number operators
$a_1^\dagger a_1$ and $a_2^\dagger a_2$ together with the identity are admissible, and for these observables
the state (\ref{6}) behaves as the separable state $(|0,1\rangle\langle 0,1| + |1,0\rangle\langle 1,0|)/2$.
Different is the situation within the first approach: in this case, all operators
are admissible, for instance $A_1=a_1^\dagger$ and $A_2=a_2$,
which indeed prevent the separability condition (\ref{4}) to be satisfied.

In view of this, here we advocate and adopt the first point of view, {\it i.e.} point {\it i)} above: 
it gives a more general and physically
complete treatment of fermion entanglement.
Nevertheless, it should be stressed that
the fermion algebra put stringent constraints
on the form of the fermion states that can be represented as product of other states,
as the ones appearing in the decomposition (\ref{4}). Specifically, any product
$\omega_k^{(1)}(A_1)\, \omega_k^{(2)}(A_2)$ vanishes 
whenever $A_1$ and $A_2$ both belong to the odd components of their respective subalgebras. 
This fact comes from the following result \cite{Araki}, whose proof we explicitly give for it is a
direct illustration of the effects of the anti-commutative character of the fermion algebra:
\begin{lemma}
Consider a bipartition $(\mathcal{A}_{1},\mathcal{A}_{2})$ of the fermion algebra $\mathcal{A}$ 
and two states $\omega_1$, $\omega_2$ on $\mathcal{A}$. Then, the linear 
functional $\omega$ on $\mathcal{A}$ defined by 
$\omega(A_1A_2)=\omega_1(A_1)\,\omega_2(A_2)$ for all $A_1\in\mathcal{A}_1$ and $A_2\in\mathcal{A}_2$ is a 
state on $\mathcal{A}$ only if at least one $\omega_i$ vanishes on the odd component of $\mathcal{A}_i$.
\end{lemma}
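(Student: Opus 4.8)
The plan is to argue by contraposition: I would show that if $\omega_1$ fails to vanish on $\mathcal{A}_1^o$ \emph{and} $\omega_2$ fails to vanish on $\mathcal{A}_2^o$, then the functional $\omega$ cannot be a state, because it violates the Hermiticity (reality) property that positivity forces on any state. Concretely, I would fix witnesses $B_1\in\mathcal{A}_1^o$ and $B_2\in\mathcal{A}_2^o$ with $z_1:=\omega_1(B_1)\neq0$ and $z_2:=\omega_2(B_2)\neq0$, and aim to derive a contradiction by evaluating $\omega$ on the single element $B_1B_2$ and on its adjoint.

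First I would record the crucial algebraic input: odd elements of the two subalgebras \emph{anti}commute. Indeed, any $C_1\in\mathcal{A}_1^o$ is a sum of odd-degree monomials in the modes $1,\dots,m$ and any $C_2\in\mathcal{A}_2^o$ a sum of odd-degree monomials in the modes $m+1,\dots,M$; reordering $C_2C_1$ into $C_1C_2$ by repeated use of the relations (\ref{1}) produces a sign $(-1)^{\text{odd}\cdot\text{odd}}=-1$, so that $C_1C_2=-C_2C_1$. Since $\dagger$ preserves parity (because $\Theta(A^\dagger)=\Theta(A)^\dagger$, whence $B_i^\dagger\in\mathcal{A}_i^o$), this applies in particular to $B_2^\dagger B_1^\dagger=-B_1^\dagger B_2^\dagger$. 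This is the step where the anti-commutative character of the fermion algebra enters decisively, in contrast with the bosonic case where the analogous reordering is trivial.

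Next I would invoke that every state is Hermitian, $\omega(Y^\dagger)=\overline{\omega(Y)}$, a standard consequence of positivity obtained by applying the Cauchy--Schwarz/polarization argument to the form $\langle A,B\rangle=\omega(A^\dagger B)$. Applying this to $Y=B_1B_2$ gives, on the one hand, $\omega(B_1B_2)=\omega_1(B_1)\,\omega_2(B_2)=z_1z_2$ directly from the defining formula for $\omega$, and on the other hand $\omega\big((B_1B_2)^\dagger\big)=\omega(B_2^\dagger B_1^\dagger)=-\omega_1(B_1^\dagger)\,\omega_2(B_2^\dagger)=-\overline{z_1}\,\overline{z_2}=-\overline{z_1z_2}$, where the minus sign is precisely the anticommutation sign established above and the last equalities use Hermiticity of the states $\omega_1,\omega_2$. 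Hermiticity of $\omega$ then reads $-\overline{z_1z_2}=\overline{z_1z_2}$, forcing $z_1z_2=0$ and contradicting $z_1,z_2\neq0$. This contradiction establishes the claim.

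I expect the main (and essentially only) obstacle to be conceptual rather than computational: a naive attempt to violate positivity by producing a negative expectation $\omega(X^\dagger X)<0$ for, say, $X=\alpha B_1+\beta B_2$ does not succeed directly, since the diagonal terms $|\alpha|^2\omega_1(B_1^\dagger B_1)+|\beta|^2\omega_2(B_2^\dagger B_2)$ are manifestly nonnegative; the obstruction instead surfaces in the off-diagonal part as a failure of \emph{reality}. Packaging the argument through Hermiticity makes this transparent and reduces the whole proof to the single sign coming from the odd--odd anticommutation. A minor point to handle carefully is the well-definedness of $\omega$ on reordered products such as $B_2^\dagger B_1^\dagger$, which is immediate once one rewrites them in the $\mathcal{A}_1\mathcal{A}_2$ order before applying the defining relation.
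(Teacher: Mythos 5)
Your proof is correct and follows essentially the same route as the paper's: both derive a sign contradiction from the Hermiticity of the putative state combined with the anticommutativity of odd elements of $\mathcal{A}_1$ with odd elements of $\mathcal{A}_2$. The only cosmetic difference is that the paper first passes to self-adjoint odd witnesses $(A^o_i+(A^o_i)^\dag)/2$, $(A^o_i-(A^o_i)^\dag)/(2i)$ so that all expectations are real, whereas you keep general odd witnesses and instead apply Hermiticity of $\omega_1$ and $\omega_2$ to the adjoints; the core mechanism is identical.
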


\begin{proof}
Suppose both states $\omega_1$ and $\omega_2$ do not vanish when acting on the odd components $\mathcal{A}^o_{1,2}$.
Then, there exist odd elements $A^o_i\in\mathcal{A}_i^o$, such that $\omega_i(A^o_i)\neq 0$, $i=1,2$.
The same is true for the self-adjoint combinations $(A^o_i+(A^o_i)^\dag)/2$ and $(A^o_i-(A^o_i)^\dag)/(2i)$: 
we can then assume $(A^o_i)^\dag=A^o_i$, so that
$\overline{\omega_i(A^o_i)}=\omega_i(A^o_i)\neq 0$, where the overline signifies complex conjugation. 
But then, due to the anti-commutativity of the odd elements $A^o_i$, one finds:

$$ 
\overline{\omega(A^o_1A^o_2)}= \omega(A^o_2A^o_1)=-\omega(A^o_1A^o_2)=\omega_1(A^o_1)\,\omega(A^o_2)\neq 0\ ,
$$ 
which is a contradiction.
\end{proof}

In other terms, given a mode bipartition $({\cal A}_1,{\cal A}_2)$ of the fermion algebra $\cal A$,
{\it i.e.} a decomposition of $\cal A$ in the subalgebra ${\cal A}_1$ generated by the first 
$m$ modes and the subalgebra ${\cal A}_2$, generated by the
remaining $M-m$ ones, the decomposition (\ref{4}) is meaningful only for local operators
$A_1 A_2$ for which $[A_1,\, A_2]=\,0$, so that, at this stage, the definition 
of separability that it encodes appears similar to the resulting one for bosonic systems.

As a further consequence of {\sl Lemma 1}, the following criterion for entanglement holds:%
\footnote{This criterion precisely detects the entanglement of the state $|\Phi\rangle$ in (\ref{6});
indeed, with the odd elements $A_1^o=a_1^\dagger$ and $A_2^o=a_2$,
one has $\langle\Phi| A_1^o A_2^o|\Phi\rangle=1/2$.}

\begin{corollary}
Given the bipartition $(\mathcal{A}_{1},\mathcal{A}_{1})$ of the fermion algebra $\mathcal{A}$,
if a state $\omega$ is non vanishing on a local operator $A_1^o A_2^o$, with the two components 
$A_1^o \in {\cal A}_1^o$, $A_2^o \in {\cal A}_2^o$ both belonging to the odd part 
of the two subalgebras, then $\omega$ is entangled.
\end{corollary}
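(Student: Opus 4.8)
The plan is to prove the contrapositive by a direct application of \textsl{Lemma 1}. Assume, toward a contradiction, that $\omega$ is separable with respect to the bipartition $(\mathcal{A}_1, \mathcal{A}_2)$. Then by \textsl{Definition 3} its expectation on the local operator $A_1^o A_2^o$ admits a decomposition of the form (\ref{4}),
$$
\omega(A_1^o A_2^o) = \sum_k \lambda_k\, \omega_k^{(1)}(A_1^o)\, \omega_k^{(2)}(A_2^o)\ ,
$$
with $\lambda_k \geq 0$, $\sum_k \lambda_k = 1$, and $\omega_k^{(1)}$, $\omega_k^{(2)}$ states on $\mathcal{A}$.

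The key step is to examine the terms of this convex sum one at a time. For each $k$, the reading of separability as a mixture of product states means that the linear functional $\omega_k$ on $\mathcal{A}$ defined on local operators by $\omega_k(A_1 A_2) = \omega_k^{(1)}(A_1)\,\omega_k^{(2)}(A_2)$ is itself a state. This is exactly the situation covered by \textsl{Lemma 1}: it forces, for every $k$, at least one of $\omega_k^{(1)}$, $\omega_k^{(2)}$ to vanish identically on the odd component $\mathcal{A}_1^o$, respectively $\mathcal{A}_2^o$, of the corresponding subalgebra. Since $A_1^o \in \mathcal{A}_1^o$ and $A_2^o \in \mathcal{A}_2^o$ are both odd, at least one of the factors $\omega_k^{(1)}(A_1^o)$, $\omega_k^{(2)}(A_2^o)$ then vanishes, so each product $\omega_k^{(1)}(A_1^o)\,\omega_k^{(2)}(A_2^o)$ equals zero.

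Substituting back, the whole sum collapses to $\omega(A_1^o A_2^o) = 0$, contradicting the hypothesis that $\omega$ is non-vanishing on $A_1^o A_2^o$. Hence $\omega$ cannot be separable, and is therefore entangled with respect to $(\mathcal{A}_1, \mathcal{A}_2)$, which is the claim.

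I expect the only delicate point to be the justification that \textsl{Lemma 1} may be applied term by term, i.e.\ that each product functional $\omega_k$ appearing in (\ref{4}) is a genuine state on $\mathcal{A}$ rather than merely a pair of states $\omega_k^{(1)}, \omega_k^{(2)}$ whose product reproduces $\omega$ on local operators. This is guaranteed by the physically correct interpretation of a separable state as a convex combination of product \emph{states}, which is precisely what supplies the positivity and normalization that \textsl{Lemma 1} requires of each component; granting this, the deduction is immediate.
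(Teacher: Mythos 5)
Your proof is correct and follows essentially the same route as the paper, whose own argument is the one-line contrapositive ``if $\omega(A_1^o A_2^o)\neq 0$ then, by \textsl{Lemma 1}, $\omega$ cannot be written as in (\ref{4})''. You merely spell out the term-by-term application of \textsl{Lemma 1} to each product functional in the convex decomposition, and the delicate point you flag (that each such product functional must itself be a state) is resolved exactly as the paper implicitly assumes.
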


\noindent
Indeed, if $\omega(A_1^o A_2^o)\neq0$, then, by {\sl Lemma 1}, $\omega$ can not be written as
in (\ref{4}), and therefore it is non-separable.

The case of pure states, {\it i.e.} states that can not be written as a convex
combination of other states, deserves a separate discussion. Indeed, in this case
the separability condition (\ref{4}) simplify and the following result can be proven:

\begin{lemma}
Pure states $\omega$ on the fermion algebra $\cal A$ are separable with respect to a given
bipartition $({\cal A}_1,{\cal A}_2)$ if and only if
\begin{equation}
\omega(A_1 A_2)=\omega(A_1)\, \omega(A_2)\ ,
\label{7}
\end{equation}
for all local operators $A_1 A_2$.
\end{lemma}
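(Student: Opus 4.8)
The plan is to split the equivalence into its two implications, the forward one being immediate and the converse carrying the whole argument. If (\ref{7}) holds, then (\ref{4}) is satisfied with a single term, $\lambda_1=1$ and $\omega^{(1)}_1=\omega^{(2)}_1=\omega$, so $\omega$ is separable; this uses neither purity nor any fermionic input. All the work is in showing that, conversely, a pure separable $\omega$ must factorize.

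First I would reduce (\ref{7}) to a statement about even observables only. Since the system carries a fixed number $N$ of fermions, $\omega$ is supported on the $N$-particle sector and therefore annihilates every odd operator, $\omega(A^o)=0$, because an odd element changes the particle number by an odd amount and maps $\mathcal{H}$ into an orthogonal sector. Writing $A_i=A_i^e+A_i^o$, the mixed terms $\omega(A_1^eA_2^o)$ and $\omega(A_1^oA_2^e)$ are expectations of odd operators and hence vanish, while by {\sl Corollary 1} separability forces $\omega(A_1^oA_2^o)=0$ for all odd $A_1^o,A_2^o$. Thus $\omega(A_1A_2)=\omega(A_1^eA_2^e)$ and $\omega(A_1)\omega(A_2)=\omega(A_1^e)\omega(A_2^e)$, so (\ref{7}) is equivalent to the factorization on the commuting even parts.

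To exploit purity I would realize $\omega$ as a convex combination of honest states on the whole algebra $\mathcal{A}$. For each $k$ set $\widehat\omega^{(2)}_k=\omega^{(2)}_k\circ\mathbb{E}$, where $\mathbb{E}(A)=\tfrac12\big(A+\Theta(A)\big)$ is the unital positive projection onto $\mathcal{A}^e$; then $\widehat\omega^{(2)}_k$ vanishes on odd elements, so by the converse of {\sl Lemma 1} (the fermionic product-state construction of \cite{Araki}) the functional $\widetilde\sigma_k:A_1A_2\mapsto\omega^{(1)}_k(A_1)\,\widehat\omega^{(2)}_k(A_2)$ is a genuine state on $\mathcal{A}$. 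Using (\ref{4}) and the vanishing of odd expectations together with {\sl Corollary 1}, one checks
\[
\sum_k\lambda_k\,\widetilde\sigma_k(A_1A_2)=\sum_k\lambda_k\,\omega^{(1)}_k(A_1)\,\omega^{(2)}_k(A_2^e)=\omega(A_1A_2^e)=\omega(A_1A_2),
\]
the last equality because $\omega(A_1A_2^o)=0$. Since products $A_1A_2$ span $\mathcal{A}$ for a mode bipartition, this shows $\omega=\sum_k\lambda_k\widetilde\sigma_k$ as states on all of $\mathcal{A}$. Purity (extremality of $\omega$) then gives $\widetilde\sigma_k=\omega$ for every $k$ with $\lambda_k>0$; picking one such $k$, $\omega(A_1A_2)=\omega^{(1)}_k(A_1)\,\widehat\omega^{(2)}_k(A_2)$, and setting $A_2={\bf 1}$ and $A_1={\bf 1}$ identifies the two factors with $\omega(A_1)$ and $\omega(A_2)$, which is exactly (\ref{7}).

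The step I expect to be the main obstacle is precisely the promotion of the product functionals to positive states on the \emph{full} algebra. Purity is a statement about extremality in the complete state space, so I cannot simply restrict to the commuting even subalgebra: the restriction of a pure state to a subalgebra is in general mixed, and no contradiction with purity would follow there. The delicate point is thus the positivity of $\widetilde\sigma_k$, i.e. that a product functional with one even factor really is a state, which rests on the graded structure of the fermion algebra rather than on the mere necessity recorded in {\sl Lemma 1}. Once that is granted, verifying $\sum_k\lambda_k\widetilde\sigma_k=\omega$ is a routine bookkeeping driven entirely by {\sl Corollary 1} and the annihilation of odd operators by $\omega$.
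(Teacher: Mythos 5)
Your proof is correct, and its skeleton is the same as the paper's: use the fact that for a mode bipartition the local operators $A_1A_2$ span $\mathcal{A}$ to realize the separable pure state $\omega$ as a convex combination of product functionals defined on the whole algebra, invoke extremality to conclude that each such functional coincides with $\omega$, and finish by setting $A_1$ or $A_2$ equal to the identity. Where you genuinely depart from the paper is at exactly the step you flag as delicate. The paper's proof simply declares the functionals $\Omega_k(A)=\sum_{ij}C_{ij}\,\omega_k^{(1)}(A_1^{(i)})\,\omega_k^{(2)}(A_2^{(j)})$ to be ``other states, defined on the whole algebra'', with only a footnote pointing back to \textsl{Lemma 1}; it addresses neither their positivity nor their well-definedness (the decomposition $A=\sum_{ij}C_{ij}A_1^{(i)}A_2^{(j)}$ is not unique). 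You instead symmetrize the second factor, replacing $\omega_k^{(2)}$ by $\omega_k^{(2)}\circ\mathbb{E}$ with $\mathbb{E}(A)=\tfrac12\bigl(A+\Theta(A)\bigr)$, so that it becomes $\Theta$-invariant and the Araki--Moriya product-state construction applies — this is the existence half of the statement whose necessity half is \textsl{Lemma 1}, a result the paper cites but never states — and you then use number conservation together with \textsl{Corollary 1} to verify that the symmetrization does not alter the convex sum. This buys a rigorous version of the one step the paper glosses over, at the price of importing an external result from \cite{Araki} (in the present finite-dimensional setting one could alternatively check the positivity of $\widetilde\sigma_k$ by hand). Your preliminary reduction of (\ref{7}) to even observables is correct but is not actually needed for the remainder of the argument.
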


\begin{proof}
The {\it if} part of the proof is trivial: according to {\sl Definition 3}, states as above are manifestly
$({\cal A}_1,{\cal A}_2)$-separable, since they are of the form (\ref{4}) with just one element in the
the convex sum.

For the {\it only if} part of the proof, recall that, in the case of mode bipartition we are considering, 
the $({\cal A}_1,{\cal A}_2)$-local operators generate the whole fermion algebra $\cal A$.
Therefore, any element $A\in{\cal A}$ can be written as a combination of local operators,
$A=\sum_{ij} C_{ij}\, A_1^{(i)}\, A_2^{(j)}$, with $A_1^{(i)}\in {\cal A}_1$ and 
$A_2^{(j)}\in {\cal A}_2$. As a consequence, if by hypothesis a state $\omega$ is separable,
{\it i.e.} it can be written as in (\ref{4}) on all $({\cal A}_1,{\cal A}_2)$-local operators,
then one has:
$$
\omega(A)=\sum_{ij} C_{ij}\sum_k \lambda_k \ \omega_k^{(1)}(A_1^{(i)})\ \omega_k^{(2)}(A_2^{(j)})=
\sum_k \lambda_k\, \Omega_k(A)\ ,
$$
in terms of other states, defined on the whole algebra $\cal A$ through the relation:
$\Omega_k(A)=\sum_{ij} C_{ij}\, \omega_k^{(1)}(A_1^{(i)})\, \omega_k^{(2)}(A_2^{(j)})$.%
\footnote{Notice that, by {\sl Lemma 1}, $\Omega_k(A_1 A_2)$ is different from zero only when $A_1$ and $A_2$
are not both odd elements.}
But since $\omega$ is pure by hypothesis, only one term in the above convex combination 
must be different from zero. By dropping the now superfluous label $k$, we have then found that:
$\omega(A_1 A_2)=\omega^{(1)}(A_1)\, \omega^{(2)}(A_2)$. The final form (\ref{7}) is obtained
by separately taking $A_1$ and $A_2$ to coincide with the identity operator.
\end{proof}

\section{Structure of entangled fermion states}

The above discussion shows that, for many-body systems formed by $N$ fermions that can occupy $M$-modes, 
the notion of entanglement can not be given once for all, but needs to be referred 
to the choice of a partition
of the modes into two disjoint sets, the first containing the first $m$ modes,
while the second the remaining $M-m$ ones. In short, we will henceforth refer to such a choice 
as the $(m,\ M-m)$-partition. It turns out that, once the partition $(m,\ M-m)$ is fixed,
the general structure of entangled $N$-fermion states can be explicitly described.

Let us first consider the case of pure states; their complete characterization is
given by the following

\begin{proposition}
A pure state $\ket{\psi}$ in the fermion Hilbert space $\mathcal{H}$
is $(m,\ M-m)$-separable if and only if it is generated out of the vacuum state 
by a $(m,\ M-m)$-local operator, {\it i.e.}
it can be written in the form
\begin{equation}
\label{8}
\ket{\psi} = \mathcal{P}(a^\dagger_1,\ldots,a^\dagger_m) \cdot 
\mathcal{Q}(a^\dagger_{m+1},\ldots,a^\dagger_M) \ket{0}\ ,
\end{equation}
where $\mathcal{P}$, $\mathcal{Q}$ are polynomials in the creation operators relative 
to the first $m$ modes and the last $M-m$ modes, respectively. Otherwise, the state is entangled.
\end{proposition}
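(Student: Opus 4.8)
The plan is to deduce both implications from Lemma 2, which for a pure state reduces $(\mathcal{A}_1,\mathcal{A}_2)$-separability to the factorization condition (\ref{7}), $\omega(A_1A_2)=\omega(A_1)\,\omega(A_2)$, on every local operator. The computations rest on one elementary remark: since $\ket{\psi}$ carries a fixed number $N$ of particles, the expectation $\bra{\psi}A\ket{\psi}$ of any odd element $A\in\mathcal{A}^o$ vanishes, because an odd operator shifts the total particle number by an odd amount and sends $\ket{\psi}$ into a sector orthogonal to $\mathcal{H}$. A useful structural fact to establish first is that every rank-one operator $\kb{n}{n'}$ on $\mathcal{H}$ factorizes as a local product $A_1A_2$ in which $A_1\in\mathcal{A}_1$ and $A_2\in\mathcal{A}_2$ carry the \emph{same} parity; hence $\mathcal{A}(\mathcal{H})$ is spanned by even--even and odd--odd local operators, and (\ref{7}) splits into two statements — even--even products factorize, whereas odd--odd products must vanish, their factorized right-hand side being zero by the previous remark, consistently with Lemma 1 and Corollary 1.

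For the \emph{if} part I would start from $\ket{\psi}=\mathcal{P}\,\mathcal{Q}\ket{0}$ and note that, for the product to be a nonzero vector of $\mathcal{H}$, only homogeneous components of complementary degrees $\bar p+\bar q=N$ survive once applied to the vacuum, so that $\ket{\psi}$ is a product vector sitting in the single sector $\mathcal{H}^{(1)}_{\bar p}\otimes\mathcal{H}^{(2)}_{\bar q}$ of $\bar p$ particles in the first $m$ modes and $\bar q$ in the remaining ones. Then (\ref{7}) is checked case by case: for even $A_1^e,A_2^e$ one moves $A_2^e$ across the first-block operators (even elements of one subalgebra commute with the whole opposite subalgebra) and uses $\ket{0}=\ket{0}_1\otimes\ket{0}_2$ to split the vacuum expectation into a product, giving $\bra{\psi}A_1^eA_2^e\ket{\psi}=\bra{\psi}A_1^e\ket{\psi}\,\bra{\psi}A_2^e\ket{\psi}$; the mixed even--odd and the odd--odd cases reduce to $0=0$ since either one factor or the whole product is odd. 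The only genuine computation is the even--even one, and its sign bookkeeping is harmless precisely because the operators being exchanged are even.

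For the \emph{only if} part, assume $\ket{\psi}$ separable. The first step is to localize it in a single sector: feeding (\ref{7}) with the even number-sector projectors $A_1=P_1^{(p)}$ (exactly $p$ particles in the first block) and $A_2=P_2^{(q)}$, and using that $P_1^{(p)}P_2^{(N-p)}$ projects onto the sector $(p,N-p)$ while $P_1^{(p)}P_2^{(q)}=0$ for $q\neq N-p$, yields $w_p=w_p^2$ for the occupation probability $w_p=\bra{\psi}P_1^{(p)}\ket{\psi}$; hence every $w_p\in\{0,1\}$ and, by normalization, exactly one sector $(\bar p,\bar q)$ survives. On this sector the number-conserving even operators of each block exhaust the full operator algebras of $\mathcal{H}^{(1)}_{\bar p}$ and $\mathcal{H}^{(2)}_{\bar q}$, so even--even factorization becomes $\bra{\psi}(X\otimes Y)\ket{\psi}=\bra{\psi}(X\otimes\id)\ket{\psi}\,\bra{\psi}(\id\otimes Y)\ket{\psi}$ for all block operators $X,Y$; this is the textbook statement that both reduced density matrices of the pure state $\ket{\psi}$ are themselves pure, forcing $\ket{\psi}=\ket{\xi}_1\otimes\ket{\eta}_2$. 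Writing the homogeneous factors as $\ket{\xi}_1=\mathcal{P}(a^\dagger_1,\dots,a^\dagger_m)\ket{0}$ and $\ket{\eta}_2=\mathcal{Q}(a^\dagger_{m+1},\dots,a^\dagger_M)\ket{0}$ and reassembling the graded tensor product reproduces the asserted form (\ref{8}).

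I expect the main obstacle to be the bookkeeping of the fermionic grading in the \emph{only if} part: one must verify that the sector projectors and block operators are genuinely even, hence legitimate factors in (\ref{7}), and that the parity signs picked up when commuting operators across the mode partition do not obstruct the identification of the fixed-sector graded tensor product with an ordinary one, on which the ``all correlations factorize $\Rightarrow$ product vector'' argument and the reconstruction $\ket{\xi}_1\otimes\ket{\eta}_2=\mathcal{P}\,\mathcal{Q}\ket{0}$ are carried out with a consistent ordering convention.
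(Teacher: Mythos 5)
Your proof is correct, and in the \emph{only if} direction it takes a genuinely different route from the paper's, even though both arguments rest on Lemma~2 and on the same family of even local operators built from number projectors (the paper realizes these as contour integrals $\frac{1}{2\pi i}\oint_\Gamma dz\,(z-N_i)^{-1}$ and uses the resulting matrix units $\kb{\{p\}}{\{p'\}}$, $\kb{\{\beta\}}{\{\beta'\}}$ to extract individual Fock coefficients). From that point the paper proceeds computationally: separability yields the coefficient identity (\ref{16}), whose diagonal case $p'=p$, $\beta'=\beta$ factorizes the moduli and whose off-diagonal case, after a phase analysis, factorizes the phases, giving $C_{\{p\},\{\beta\}}=C_{\{p\}}C'_{\{\beta\}}$ directly. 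You instead argue structurally: the relation $w_p=w_p^2$ obtained from the sector projectors pins $\ket{\psi}$ into a single block $\mathcal{H}_{\bar p}\cong\mathbb{C}^{D_{\bar p}}\otimes\mathbb{C}^{D'_{N-\bar p}}$ (the isomorphism of Remark~3), after which factorization of all even--even correlations says precisely $\kb{\psi}{\psi}=\rho_1\otimes\rho_2$, and purity of the marginals forces a product vector. Your version buys conceptual clarity: it makes explicit the single-sector localization, which in the paper is only implicit in the factorized form of the coefficients, and it delegates the endgame to the standard distinguishable-particle fact instead of a phase computation. The price is exactly the bookkeeping you flag at the end, and it does close: the sector projectors and the block matrix units carry $\bar p$ (respectively $N-\bar p$) creators and equally many annihilators, hence are even, so no anticommutation signs obstruct identifying $A_1A_2$ with $X\otimes Y$ on the sector, and the ordering convention of (\ref{2}) makes $\ket{\xi}_1\otimes\ket{\eta}_2=\mathcal{P}\,\mathcal{Q}\ket{0}$ literal. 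Your preliminary observation that a nonzero product $\mathcal{P}\,\mathcal{Q}\ket{0}$ in the fixed-$N$ space forces a single pair of complementary homogeneous degrees is also needed for the \emph{if} direction and is correct.
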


\begin{proof}
First of all, recall that in the present situation the condition of
separability reduces to the simpler expression (\ref{7}); clearly, the state
in (\ref{8}) satisfies it by taking for the expectation value of a generic
fermion operator $A\in {\cal A}$
the usual state-average: $\omega(A)\equiv\langle\psi| A|\psi\rangle$.

In order to prove the converse, {\it i.e.}
that from the separability condition (\ref{7}) the expression (\ref{8}) follows,
we start decomposing $|\psi\rangle$ in the Fock basis given in (\ref{2});
taking into account the $(m, M-m)$-bipartition of the modes, one can write
\begin{equation}
|\psi\rangle=\sum_{\{k\},\{\alpha\}} C_{\{k\},\{\alpha\}}\ | k_1, \ldots, k_m\, 
;\alpha_{m+1}, \ldots, \alpha_M\rangle\ ,\qquad 
\sum_{ \{k\}, \{\alpha\} } \left| C_{\{k\},\{\alpha\}}\right|^2=1\ ,
\label{9}
\end{equation}
where $\{k\}=(k_1,k_2,\ldots,k_m)$, 
respectively $\{\alpha\}=(\alpha_{m+1},\alpha_{m+2},\ldots,\alpha_{M})$,
is the vector of occupation numbers of the first $m$, respectively second $M-m$ modes, and
$$
\ket{k_{1},\dots,k_{m};\alpha_{m+1},\dots,\alpha_{M}}=
(\hat{a}_1^\dag)^{k_1}\cdots(\hat{a}_m^\dag)^{k_m}(\hat{a}_{m+1}^\dag)^{\alpha_{m+1}}\cdots
(\hat{a}_M^\dag)^{\alpha_M}\ket{0}\ .
$$
The condition of separability of {\sl Lemma 2} assures that
$\langle\Psi\vert A_1A_2\vert\Psi\rangle=\langle\Psi\vert A_1\vert\Psi\rangle\,\langle\Psi\vert A_2\vert\Psi\rangle$
for all fermion operators $A_1$, $A_2$ belonging to the first, second partition, respectively.%
\footnote{Note again that, due to {\sl Lemma 1}, at least one of the two operators $A_1$, $A_2$
need to be even.}
We will use this request to force the coefficients $C_{\{k\},\{\alpha\}}$ to be in product form,
$C_{\{k\},\{\alpha\}}=C_{\{k\}}\,C'_{\{\alpha\}}$, through suitable choice of $A_1$ and $A_2$.
To this aim, let us consider the following operators
\begin{eqnarray}
\label{aid2}
&&A_{1} = ({a}^\dag_{1})^{p'_{1}}\dots({a}^\dag_{m})^{p'_{m}}\,
\left(\dfrac{1}{2\pi i} \oint_{\mit\Gamma} \dfrac{dz}{z-{N}_{1}}\right)\, 
{a}_{m}^{p_{m}}\dots {a}_{1}^{p_{1}}\ , \\
\label{aid3}
&&A_{2} = ({a}^\dag_{m+1})^{\beta'_{m+1}}\dots({a}^\dag_{M})^{\beta'_{M}}\,
\left(\dfrac{1}{2\pi i} \oint_{\mit\Gamma} \dfrac{dz}{z-{N}_{2}}\right)\, 
{a}_{M}^{\beta_{M}}\dots {a}_{m+1}^{\beta_{m+1}}\ ,\\
\nonumber
&&A_1 A_2 =({a}^\dag_{1})^{p'_{1}}\dots({a}^\dag_{m})^{p'_{m}}({a}^\dag_{m+1})^{\beta'_{m+1}}\dots({a}^\dag_{M})^{\beta'_{M}}\\
&&\hskip 5cm \times \left(\dfrac{1}{2\pi i} \oint_{\mit\Gamma} \dfrac{dz}{z-{N}}\right)\,
{a}_{M}^{\beta_{M}}\dots {a}_{m+1}^{\beta_{m+1}}{a}_{m}^{p_{m}}\dots {a}_{1}^{p_{1}}\ ,
\end{eqnarray}
where $p_i$, $p'_i$, $\beta_j$, $\beta'_j$ are either 0 or 1, 
${N}_1=\sum_{k=1}^m{a}^\dag_k\,{a}_k$, ${N}_2=\sum_{j=m+1}^M {a}_j^\dag\,{a}_j$ 
are the number operators relative to the two sub-sets of modes, while $N=N_1+N_2$ is the total number
of fermions in the system; further,
$\mit \Gamma$ is a contour around $z=0$ excluding all other integers. 
The choice of contour forces the three integrals above to vanish unless $z=0$, 
whence the first two project onto the sub-spaces with no particles in the first, second partition, respectively,
while the third one onto the vacuum.%
\footnote{Since the number operators are sums of quadratic monomials, by series expansion 
the three integrals provide operators that are elements of even subalgebras.}
With a slight abuse of language, the three operators above can be represented in short as 
$A_1=\kb{\{p\}}{\{p'\}}$, $A_2=\kb{\{\beta\}}{\{\beta'\}}$ and
$A_1 A_2=\kb{\{p\},\{\beta\}}{\{p'\},\{\beta'\}}$.
Then, using (\ref{9}), one easily obtains
\begin{eqnarray}
\label{13}
&&\bkmv{\Psi}{A_1}{\Psi} =
\sum_{\{\alpha\}} \overline{C}_{\{p'\},\{\alpha\}} C_{\{p\},\{\alpha\}}\ ,\\
\label{14}
&&\bkmv{\Psi}{A_2}{\Psi} =
\sum_{\{k\}} \overline{C}_{\{k\},\{\beta'\}} C_{\{k\},\{\beta\}}\ ,\\
\label{15}
&&\bkmv{\Psi}{A_1A_2}{\Psi} =\overline{C}_{\{p'\},\{\beta'\}} C_{\{p\},\{\beta\}}\ ,
\end{eqnarray}
and the assumed separability of $|\psi\rangle$ yields the condition:
\begin{equation}
\label{16}
\overline{C}_{\{p'\},\{\beta'\}} C_{\{p\},\{\beta\}}  =  
\left( \sum_{\{\alpha\}} \overline{C}_{\{p'\},\{\alpha\}} C_{\{p\},\{\alpha\}} \right) 
\left( \sum_{\{k\}} \overline{C}_{\{k\},\{\beta'\}} C_{\{k\},\{\beta\}} \right)\ .
\end{equation}
For $p'=p$ and $\beta'=\beta$ this expression becomes
$$
\modu{C_{\{p\},\{\beta\}}}^2 =  \left( \sum_{\{\alpha\}} \modu{C_{\{p\},\{\alpha\}}}^2 \right)
\left( \sum_{\{k\}} \modu{C_{\{k\},\{\beta\}}}^2 \right)\ .
$$
Setting $D_{\{ p \}}=\sum_{\{ \alpha \}} \modu{C_{\{p\},\{\alpha\}}}^2$ and $D'_{\{ \beta \}}=
\sum_{\{ k \}} \modu{C_{\{k\},\{\beta\}}}^2$, one can rewrite
\begin{equation}
\label{17}
C_{\{p\},\{\beta\}} = \sqrt{D_{\{p\}} \phantom{'}}\, \sqrt{D'_{\{\beta\}}}\, \e^{i \theta_{\{p\}\{\beta\}}}\ .
\end{equation}
Inserting this expression in \eqref{16}, we obtain
$$
\e^{i(\theta_{\{p'\}\{\beta'\})}-\theta_{\{p\}\{\beta\}})}= \sum_{\{ \alpha \}} D'_{\{\alpha\}} \e^{i \left( \theta_{\{p\}\{\alpha\}}-\theta_{\{p'\}\{\alpha\}}\right)}\times \sum_{\{k\}} D_{\{k\}}
\e^{i \left( \theta_{\{k\}\{\beta\}}-\theta_{\{k\}\{\beta'\}}\right)}\ .
$$
Since due to the state normalization condition $\sum_{\{p\}} D_{\{ p \}}=1=\sum_{\{\beta\}} D'_{\{ \beta \}}$, 
by setting $\beta'=\beta$
one sees that $\theta_{\{p\}\{\beta\}}- \theta_{\{p'\}\{\beta\}}=\phi_{pp'}$ for all $\beta$,
{\it i.e.} this phase difference is a function of the set of indices $p$ and $p'$, but not of $\beta$.
Fixing an arbitrary $p'$ and inserting this expression into \eqref{17} yields
\begin{equation}
C_{\{p\},\{\beta\}} = \sqrt{D_{\{p\}}\phantom{'}}\,\e^{i \phi_{pp'}}  \
\sqrt{D'_{\{\beta\}}}\, \e^{i\theta_{\{p'\}\{\beta\}}}\ ,
\end{equation}
which is of the required form.
\end{proof}

\noindent
{\bf Remark 2:} In the proof of the previous proposition nothing depended on having a finite number $m$ of modes 
in the first partition and a finite number $M-m$ of modes in the second partition. 
The result thus extends to the case of infinite disjoint sets of modes 
for all normalized pure states $\ket{\psi}$.
\hfill$\Box$
\medskip

Examples of $N$ fermions pure separable states are the Fock states; indeed, recalling
(\ref{2}), they can be recast in the form (\ref{8}):
\begin{equation}
| k_1, \ldots, k_m; \alpha_{m+1}, \ldots, \alpha_M\rangle=
\left[(\hat{a}_1^\dag)^{k_1}\cdots(\hat{a}_m^\dag)^{k_m}\right] \times
\left[(\hat{a}_{m+1}^\dag)^{\alpha_{m+1}}\cdots(\hat{a}_M^\dag)^{\alpha_M}\right]\ket{0}\ ,
\label{18}
\end{equation}
where $\cal P$ and $\cal Q$ are now monomials in the creation operators of the two partitions.
By varying $k_i, \alpha_j\in \{0,1\}$ and the integer $k=\sum_{i=1}^m k_i$, such that $0\leq k \leq N$,
these states generate the whole Hilbert space $\cal H$.
This basis states can be relabeled in a different, more convenient way as:
\begin{equation}
| k, \sigma; N-k, \sigma'\rangle\ ,\quad \sigma=1,2, \ldots, D_k\equiv{m\choose k}\ ,\
\sigma'=1, 2,\ldots, D'_{N-k}\equiv{M-m\choose N-k}\ ;
\label{19}
\end{equation}
the integer $k$ gives the number of fermions occupying the first $m$ modes, $k\leq m$,
while $\sigma$ counts the different ways in which those particles can fill those modes;
similarly, $\sigma'$ labels the ways in which the remaining $N-k$ fermions
can occupy the other $M-m$ modes.%
\footnote{In order to completely identify the basis states,
two extra labels $\sigma$ and $\sigma'$ are needed for each value of $k$, so that
these labels (and the range of values they take) are in general $k$-dependent: in order to keep
the notation as a simple as possible, in the following these dependences will be tacitly understood.}
In this new labelling, the property of orthonormality of the states
in (\ref{19}) simply becomes:\hfill\break
$\langle k, \sigma; N-k, \sigma'| l, \tau; N-l, \tau'\rangle=
\delta_{kl}\,\delta_{\sigma\tau}\,\delta_{\sigma'\tau'}$.

For fixed $k$, the basis vectors $\{| k, \sigma; N-k, \sigma'\rangle\}$ span a subspace ${\cal H}_k$
of dimension $D_k\, D'_{N-k}$; the union of all these orthogonal subspaces give the whole fermion
Hilbert space $\cal H$, recovering its dimension $D$ \cite{Prudnikov}:
\begin{equation}
\sum_{k=0}^N D_k\, D'_{N-k}=D= {M\choose N}\ .
\label{2-8}
\end{equation}

\noindent
{\bf Remark 3:} Note that the space ${\cal H}_k$ is naturally isomorphic to the tensor
product space $\mathbb{C}^{D_k}\otimes \mathbb{C}^{D'_{N-k}}$; through this isomorphism, the states
$| k, \sigma; N-k, \sigma'\rangle$ can then be identified with the
corresponding basis states of the form $| k, \sigma\rangle \otimes | N-k, \sigma'\rangle$.
This observation will be useful below in the classification of entangled fermion states.
\hfill$\Box$
\medskip

Using the above notation, a generic fermion mixed state $\rho$
can then be written as:
\begin{equation}
\rho=\sum_{k,l=N_-}^{ N_+ }\ \sum_{\sigma,\sigma',\tau,\tau'}\
\rho_{k \sigma\sigma', l\tau\tau'}\ | k, \sigma; N-k, \sigma'\rangle \langle l, \tau; N-l, \tau' |\ ,
\quad \sum_{k=N_-}^{ N_+ }\ \sum_{\sigma,\sigma'}\
\rho_{k \sigma\sigma', k\sigma\sigma'}=1\ .
\label{22}
\end{equation}
where $N_-={\rm max}\{0,N-M+m\}$ and $N_+={\rm min}\{N,m\}$ are the minimum and maximum
number of fermions that the first partition can contain, due to the exclusion principle.

The set of all states form a convex set, whose extremals are given by the pure ones;
{\sl Proposition 1} can then be used to characterize separable mixed states:

\begin{corollary}
A mixed state $\rho$ as in (\ref{22}) is
$(m, M-m)$-separable if and only if it is the
convex combination of projectors on pure $(m, M-m)$-separable
states; otherwise, the state $\rho$ is $(m, M-m)$-entangled.
\end{corollary}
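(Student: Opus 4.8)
The plan is to prove both implications directly, reducing the mixed-state statement to the pure-state characterization of {\sl Proposition 1} via {\sl Lemma 2} and {\sl Lemma 1}; the convex-set remark preceding the statement then merely identifies these pure separable states as the extreme points of the separable set.

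For the {\it if} part, suppose $\rho=\sum_i\mu_i\,\ket{\psi_i}\bra{\psi_i}$ is a convex combination of projectors onto pure $(m,M-m)$-separable vectors $\ket{\psi_i}$. By {\sl Lemma 2}, each satisfies $\langle\psi_i|A_1A_2|\psi_i\rangle=\langle\psi_i|A_1|\psi_i\rangle\,\langle\psi_i|A_2|\psi_i\rangle$ for every local operator $A_1A_2$. Setting $\omega_i^{(1)}(\cdot)=\langle\psi_i|\cdot|\psi_i\rangle$ and $\omega_i^{(2)}(\cdot)=\langle\psi_i|\cdot|\psi_i\rangle$, one gets $\rho(A_1A_2)=\sum_i\mu_i\,\omega_i^{(1)}(A_1)\,\omega_i^{(2)}(A_2)$, which is exactly the separable form (\ref{4}); hence $\rho$ is $(m,M-m)$-separable.

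For the {\it only if} part, I start from a separable $\rho$, written as in (\ref{4}), $\rho(A_1A_2)=\sum_j\lambda_j\,\omega_j^{(1)}(A_1)\,\omega_j^{(2)}(A_2)$. By {\sl Lemma 1}, in each term at least one factor, say $\omega_j^{(2)}$, vanishes on the odd component $\mathcal{A}_2^o$. Each $\omega_j^{(i)}$ is a state on the finite-dimensional algebra $\mathcal{A}_i$ and hence is given by a density matrix on the Fock space of the $i$-th set of modes; I decompose these into pure states, $\omega_j^{(1)}=\sum_a p_{ja}\,\langle\chi_{ja}|\cdot|\chi_{ja}\rangle$ and $\omega_j^{(2)}=\sum_b q_{jb}\,\langle\eta_{jb}|\cdot|\eta_{jb}\rangle$, with $\ket{\chi_{ja}}=\mathcal{P}_{ja}(a^\dagger_1,\dots,a^\dagger_m)\ket{0}$ and $\ket{\eta_{jb}}=\mathcal{Q}_{jb}(a^\dagger_{m+1},\dots,a^\dagger_M)\ket{0}$. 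Since the even state $\omega_j^{(2)}$ corresponds to a $\Theta$-invariant density matrix, its spectral decomposition can be chosen with the $\ket{\eta_{jb}}$ of definite parity. The candidate pure separable vectors are then $\ket{\Psi_{jab}}=\mathcal{P}_{ja}\,\mathcal{Q}_{jb}\ket{0}$, which are of the form (\ref{8}), and I must show $\rho=\sum_{j,a,b}\lambda_j p_{ja}q_{jb}\,\ket{\Psi_{jab}}\bra{\Psi_{jab}}$.

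The core computation, and the main obstacle, is to verify $\langle\Psi_{jab}|A_1A_2|\Psi_{jab}\rangle=\langle\chi_{ja}|A_1|\chi_{ja}\rangle\,\langle\eta_{jb}|A_2|\eta_{jb}\rangle$ for all local $A_1A_2$, in spite of the anticommutation signs. Realizing the Fock space as the graded tensor product of the two mode sets, so that even elements of $\mathcal{A}_2$ act trivially on the first factor while odd ones drag along its parity string, one checks that $\ket{\Psi_{jab}}$ factorizes as $\ket{\chi_{ja}}\otimes\ket{\eta_{jb}}$, and that the potentially dangerous odd contributions drop out because an odd element sends the definite-parity $\ket{\eta_{jb}}$ into the orthogonal parity sector, exactly as in the contradiction driving {\sl Lemma 1}. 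This is precisely where the even-factor choice and the definite-parity decomposition are indispensable. Finally, since the local operators span $\mathcal{A}$, the displayed equality of expectations upgrades to an operator identity; testing it against the local projectors onto fixed partition occupation numbers forces every contributing $\ket{\chi_{ja}}$ and $\ket{\eta_{jb}}$ to carry complementary definite particle numbers summing to $N$, so that each $\ket{\Psi_{jab}}$ automatically lies in $\mathcal{H}$ and is a genuine pure $(m,M-m)$-separable state. The asserted convex decomposition of $\rho$ then follows.
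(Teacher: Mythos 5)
Your proposal is correct, and it is worth pointing out that the paper gives no proof of this corollary at all: it is stated as an immediate consequence of the remark that the state space is convex with pure states as extremals, together with {\sl Proposition 1}. Your argument supplies exactly the step that remark glosses over, namely the \emph{only if} direction. The \emph{if} direction is the same trivial observation via {\sl Lemma 2}. For the converse, your route --- spectrally decompose each product functional $\omega_j^{(1)}\,\omega_j^{(2)}$ into pure states of the two mode algebras, form the product vectors $\vert\Psi_{jab}\rangle=\mathcal{P}_{ja}\mathcal{Q}_{jb}\vert 0\rangle$ of the form (\ref{8}), and check they reproduce the product expectations --- is the natural rigorous implementation, and your closing step is the key one: testing (\ref{4}) on the local number projectors $P_1^{(k)}P_2^{(l)}$, whose expectation in $\rho$ vanishes for $k+l\neq N$ while every term of the convex sum is non-negative, forces each $\omega_j^{(1)}$, $\omega_j^{(2)}$ to be supported on complementary fixed particle numbers, so the $\vert\Psi_{jab}\rangle$ genuinely lie in $\mathcal{H}$. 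Two minor caveats. First, invoking {\sl Lemma 1} term by term is not strictly licensed by {\sl Definition 3}, which only demands that the \emph{sum} (\ref{4}) reproduce $\rho$ on local operators, not that each product functional separately be a state; but this costs you nothing, since your particle-number argument independently makes both factors in every term number-definite, hence even. Second, the ``core computation'' $\langle\Psi_{jab}\vert A_1A_2\vert\Psi_{jab}\rangle=\langle\chi_{ja}\vert A_1\vert\chi_{ja}\rangle\,\langle\eta_{jb}\vert A_2\vert\eta_{jb}\rangle$ is sketched rather than carried out, but the sketch is sound: once $\mathcal{P}_{ja}$ and $\mathcal{Q}_{jb}$ are number-homogeneous, any odd $A_1$ or $A_2$ annihilates both sides by parity, and even factors commute through, so the identity holds and the convex decomposition of $\rho$ into pure separable projectors follows.
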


In general, to determine whether a given density matrix $\rho$ can be written in separable 
form is a hard task and one is forced to rely on suitable separability tests, that however
are in general not exhaustive.

One of such tests has already been introduced through {\sl Corollary 1}, and is peculiar
to fermion systems: if there exists a local operator $A_1^o A_2^o\,$, with both
components odd under the action of the involution $\Theta$ ({\it cf.} {\sl Definition 4}),
such that the expectation value $\langle A_1^o A_2^o \rangle={\rm Tr}\big[\rho A_1^o A_2^o\big]$ 
is non vanishing, then the state $\rho$ is surely entangled. 
This entanglement criterion turns out to be exhaustive in the case of bipartitions of type $(1, M-1)$,
{\it i.e.} when the first partition contains just one mode, while the second the remaining $M-1$ ones.

\begin{proposition}
A $M$-mode state of an $N$-fermion system is entangled with respect to
the bipartition into one mode and the rest if and only if its expectation value
is nonvanishing on a local operator whose components both belong to the odd
part of the corresponding operator algebras.
\end {proposition}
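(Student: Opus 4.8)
The plan is to prove the two implications separately, since one direction is already essentially in hand. The \emph{if} part is exactly \emph{Corollary 1}: if $\omega$ is nonvanishing on a local operator $A_1^o A_2^o$ with both factors odd, then by \emph{Lemma 1} it cannot be written in the form (\ref{4}), so it is entangled; this holds for any bipartition, in particular the $(1,M-1)$ one. All the work therefore goes into the \emph{only if} direction, which I would prove in contrapositive form: assuming $\langle A_1^o A_2^o\rangle=0$ for \emph{every} pair of odd local factors, I would show that $\rho$ is separable.

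The first key step is to exploit that $m=1$ collapses the mode decomposition. Since $D_k=\binom{1}{k}$ vanishes unless $k\in\{0,1\}$, the Hilbert space splits as $\mathcal{H}=\mathcal{H}_0\oplus\mathcal{H}_1$, where $\mathcal{H}_0$ collects the configurations with the first mode empty ($N$ fermions among modes $2,\dots,M$) and $\mathcal{H}_1$ those with the first mode occupied ($N-1$ fermions among the remaining modes). The odd part of $\mathcal{A}_1$ is spanned by $a_1$ and $a_1^\dagger$, which intertwine these two blocks. I would then argue that the norm closure of $\mathcal{A}_2$ is the full algebra of bounded operators on the Fock space of modes $2,\dots,M$, so it contains every rank-one operator $\kb{\psi}{\phi}$; those sending the $N$-sector to the $(N-1)$-sector are odd, and for such an $A_2^o$ the product $a_1^\dagger A_2^o$ preserves the total particle number and restricts on $\mathcal{H}$ to a matrix unit $\kb{1;\psi}{0;\phi}$ carrying $\mathcal{H}_0$ into $\mathcal{H}_1$. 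Hence $\langle a_1^\dagger A_2^o\rangle$ ranges over all matrix elements of the off-diagonal block $\rho_{01}$ (and $\langle a_1 A_2^o\rangle$ over those of $\rho_{10}=\rho_{01}^\dagger$). The vanishing hypothesis then forces $\rho_{01}=\rho_{10}=0$, so $\rho=\rho_{00}\oplus\rho_{11}$ is block diagonal.

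The second step turns block-diagonality into separability. Diagonalizing each block, every eigenvector of $\rho_{00}$ has the first mode empty and is therefore of the form $\mathcal{Q}(a_{2}^\dagger,\dots,a_M^\dagger)\ket{0}$, while every eigenvector of $\rho_{11}$ carries one particle in the first mode and is of the form $a_1^\dagger\,\mathcal{Q}(a_2^\dagger,\dots,a_M^\dagger)\ket{0}$. In both cases the vector has the shape $\mathcal{P}\cdot\mathcal{Q}\ket{0}$ with $\mathcal{P}=1$ or $\mathcal{P}=a_1^\dagger$, so by \emph{Proposition 1} each such pure state is $(1,M-1)$-separable. Writing $\rho$ as the convex combination of these rank-one projectors and invoking \emph{Corollary 2} then yields separability, completing the argument.

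The real content, and the step I expect to be most delicate, is the translation carried out in the second paragraph: one must verify that odd elements of $\mathcal{A}_2$ genuinely realize arbitrary coherences between $\mathcal{H}_0$ and $\mathcal{H}_1$, keeping careful track of the anticommutation signs picked up when $a_1^\dagger$ is pulled through $A_2^o$, and that nothing further is required. This is also where the restriction to a single first mode is essential: since $D_0=D_1=1$, each block $\mathcal{H}_k$ has a trivial, one-dimensional first factor, so no entanglement can hide inside a block and the only nonclassical correlations are the off-diagonal coherences, all of odd type. For $m\ge 2$ this would fail on two counts—blocks $\mathcal{H}_k$ with $D_k>1$ can support internal entanglement, and coherences between sectors differing by an even number of particles are invisible to products of odd operators—so the criterion would cease to be exhaustive.
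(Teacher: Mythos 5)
Your proof is correct and follows essentially the same route as the paper's: decompose $\rho$ into its block-diagonal and off-diagonal parts with respect to the occupation of the single first mode, observe that odd--odd local operators detect exactly the off-diagonal coherences, and render the block-diagonal part separable by diagonalizing each block into pure states of the product form of Proposition 1. You are merely more explicit than the paper on the converse detection step (that vanishing of \emph{all} odd--odd expectations forces the off-diagonal part to vanish, via the matrix units $a_1^\dagger A_2^o$), which the paper's proof asserts as ``clear''.
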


\begin{proof}
A generic state of the system can be written in as
in (\ref{22}), though dropping the unnecessary primed greek labels:
\begin{equation}
\rho=\sum_{k,l=0}^1\ \sum_{\sigma,\tau}\
\rho_{k \sigma, l\tau}\ | k; N-k, \sigma\rangle \langle l; N-l, \tau |\ ,
\quad \sum_{k=0}^1\ \sum_{\sigma}\
\rho_{k \sigma, k\sigma}=1\ .
\label{23}
\end{equation}
It can be further decomposed into a ``diagonal'' and ``off-diagonal'' part
\begin{equation}
\rho=\rho_d+\eta\ ,
\label{24}
\end{equation}
with
\begin{eqnarray}
\label{13}
&&\rho_d=\sum_{k=0}^1\ \sum_{\sigma,\tau}\
\rho_{k \sigma, k\tau}\ | k; N-k, \sigma\rangle \langle k; N-k, \tau |\ ,\\
\label{14}
&&\eta=\sum_{k=0}^1\ \sum_{\sigma,\tau}\
\rho_{k \sigma, (1-k)\tau}\ | k; N-k, \sigma\rangle \langle 1-k; N+k-1, \tau |\ .
\end{eqnarray}
Clearly, only $\eta$ can give a nonvanishing contribution to the expectation value
${\rm Tr}\big[\rho A_1^o A_2^o\big]$, where $A_1^o A_2^o$ is a local operator
whose components are both odd.%
\footnote{Note that $\eta$ is not a state, since in general
it is not positive; it is the difference of two density matrices.}
 Therefore,
the {\sl Proposition} is proven once we show that the state $\rho_d$ can be written in separable form.
In order to prove this, let us make a change of basis in the second $M-1$ partition
passing from the Fock states to another set of separable states adapted to $\rho$,
such that its components along this new basis satisfy: 
$\rho_{k \sigma, k\tau}=\rho_{k \sigma, k\sigma}\ \delta_{\sigma\tau}$; notice that this is always possible
through suitable local, unitary transformations diagonalizing the two matrices
${\cal M}_{\sigma\tau}^{(k)}\equiv\big[\rho_{k \sigma, k\tau}\big]$, $k=0,1$. In this new basis,
$\rho_d$ results a convex sum of projections on separable pure states and therefore it is itself separable.
In conclusion, $\rho$ is entangled if and only if its ``off-diagonal'' part $\eta$ is nonvanishing.
\end{proof}

Another very useful entanglement criteria involves the operation
of partial transposition \cite{Peres,Horodecki2}: 
a state $\rho$ for which the partially transposed density matrix $\tilde\rho$ is
no longer positive is surely entangled. This lack of positivity can be quantified by the
so-called negativity \cite{Zyczkowski,Vidal1}:
\begin{equation}
{\cal N}(\rho)=\, {1\over2}\left( {\rm Tr}\Big[\sqrt{\tilde\rho^\dagger \tilde\rho}\Big]
- {\rm Tr}[\rho]\right)\ .
\label{27}
\end{equation}
which is nonvanishing only in presence of a non positive $\tilde\rho$.

In the case of the $(1, M-1)$-bipartition considered above, 
the partial transposition operation applied to the first partition gives
results that are completely equivalent to the
ones obtained in {\sl Proposition 2}. Indeed, explicit computation shows that
the negativity of the state in (\ref{23}) is nonvanishing if and only if
at least one of the off-diagonal components $\rho_{k \sigma, (1-k)\tau}$, $k=0, 1$, 
are nonzero.

Although not exhaustive, the partial transposition criterion results more powerful
than the one based on {\sl Corollary 1}, allowing a complete characterization
of the structure of entangled $N$-fermion states.

\begin{proposition}
A generic $(m, M-m)$-mode bipartite state (\ref{22}) is entangled
if and only if it can not be cast in the following block diagonal form
\begin{equation}
\rho=\sum_{k=N_-}^{ N_+ } p_k\ \rho_k\ ,\qquad 
\sum_{k=N_-}^{ N_+ } p_k=1\ ,\quad {\rm Tr}[\rho_k]=1\ ,
\label{28}
\end{equation}
with
\begin{equation}
\rho_k=\sum_{\sigma,\sigma',\tau,\tau'}\
\rho_{k \sigma\sigma', k\tau\tau'}\ | k, \sigma; N-k, \sigma'\rangle \langle k, \tau; N-k, \tau' |\ ,
\quad \sum_{\sigma,\sigma'}\rho_{k \sigma\sigma', k\sigma\sigma'}=1\ ,
\label{29}
\end{equation}
({\it i.e.} at least one of its non-diagonal coefficients $\rho_{k \sigma\sigma', l\tau\tau'}$, $k\neq l$,
is nonvanishing),
or, if it can, at least one of its diagonal blocks $\rho_k$ is non-separable.%
\footnote{For each block $\rho_k$, separability is understood with reference to
the isomorphic structure $\mathbb{C}^{D_k}\otimes \mathbb{C}^{D'_{N-k}}$ 
mentioned before (see {\sl Remark 3}).}
\end{proposition}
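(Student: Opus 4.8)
The plan is to reduce everything to the pure-state characterization of {\sl Proposition 1} together with the convexity statement of {\sl Corollary 2}, the only genuinely new ingredient being a structural observation on how the fixed total particle number constrains the factorized form (\ref{8}). So first I would prove the following key fact: every pure $(m,M-m)$-separable state lies entirely in a single sector $\mathcal{H}_k$ and is a product vector there. Indeed, by {\sl Proposition 1} such a state equals $\mathcal{P}(a_1^\dagger,\ldots,a_m^\dagger)\,\mathcal{Q}(a_{m+1}^\dagger,\ldots,a_M^\dagger)\ket{0}$. Expanding $\mathcal{P}=\sum_j\mathcal{P}_j$ and $\mathcal{Q}=\sum_i\mathcal{Q}_i$ into homogeneous parts of degrees $j$ and $i$, the vector $\mathcal{P}_j\mathcal{Q}_i\ket{0}=\pm(\mathcal{P}_j\ket{0})\otimes(\mathcal{Q}_i\ket{0})$ carries exactly $j$ fermions in the first block and $i$ in the second, and distinct $(i,j)$ give mutually orthogonal contributions. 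Since the state must belong to $\mathcal{H}$, i.e. carry total number $N$, every contribution with $i+j\neq N$ must vanish; as $\mathcal{P}_j\ket{0}$ and $\mathcal{Q}_i\ket{0}$ are nonzero whenever the corresponding homogeneous parts are, the surviving degrees form a product set lying on the line $i+j=N$, forcing $\mathcal{P}$ and $\mathcal{Q}$ to be homogeneous of complementary degrees $k$ and $N-k$. Hence the state is $\mathcal{P}_k\mathcal{Q}_{N-k}\ket{0}\in\mathcal{H}_k$, and under the isomorphism $\mathcal{H}_k\cong\mathbb{C}^{D_k}\otimes\mathbb{C}^{D'_{N-k}}$ of {\sl Remark 3} it corresponds to the product vector $(\mathcal{P}_k\ket{0})\otimes(\mathcal{Q}_{N-k}\ket{0})$ (this also explains why the state (\ref{6}), which spreads over the two sectors $k=0$ and $k=1$, is entangled).

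For the ``only if'' direction I would combine this with {\sl Corollary 2}. A separable $\rho$ is a convex combination $\sum_\mu q_\mu\kb{\psi_\mu}{\psi_\mu}$ of pure separable states, and by the step above each $\ket{\psi_\mu}$ sits in a single sector $\mathcal{H}_{k_\mu}$ as a product vector. Collecting the terms belonging to each fixed sector $k$ yields $\rho=\sum_k p_k\,\rho_k$ with $p_k=\sum_{\mu:\,k_\mu=k}q_\mu$ and $\rho_k$ a convex combination of product projectors supported in $\mathcal{H}_k$. The grouping shows that $\rho$ is block diagonal, i.e. all coefficients $\rho_{k\sigma\sigma',l\tau\tau'}$ with $k\neq l$ vanish (equivalently $[\rho,N_1]=0$), while the form of each $\rho_k$ shows it is separable with respect to $\mathbb{C}^{D_k}\otimes\mathbb{C}^{D'_{N-k}}$. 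Contrapositively, if $\rho$ fails to be block diagonal or has a non-separable diagonal block, it is entangled.

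For the converse I would reverse the isomorphism. If $\rho=\sum_k p_k\,\rho_k$ is block diagonal with each $\rho_k$ separable on $\mathbb{C}^{D_k}\otimes\mathbb{C}^{D'_{N-k}}$, then each $\rho_k$ is a convex combination of pure product states; each such product state corresponds under {\sl Remark 3} to a fermion state of the form (\ref{8}) with homogeneous $\mathcal{P}$, $\mathcal{Q}$ of degrees $k$ and $N-k$, hence is pure $(m,M-m)$-separable by {\sl Proposition 1}. Therefore $\rho$ is itself a convex combination of pure separable states and is separable by {\sl Corollary 2}. Combining the two directions gives the stated equivalence.

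The main obstacle is the first step: making rigorous that the fixed total particle number collapses the general factorized form (\ref{8}) onto a single sector carrying a genuine tensor-product structure. Everything afterwards is bookkeeping through the sector decomposition, the {\sl Remark 3} isomorphism, and the convexity of {\sl Corollary 2}; in particular the block-diagonality is precisely the vanishing of the inter-sector coherences that the negativity of the partial transpose detects, consistent with the discussion preceding the proposition.
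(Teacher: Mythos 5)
Your proof is correct, but it takes a genuinely different route from the paper's. The paper handles the crucial direction --- a state with a nonvanishing coherence between sectors $k\neq l$ is entangled --- by appealing to partial transposition, asserting (with reference to techniques developed for bosons in \cite{Benatti3}) that such a state does not remain positive under partial transposition. You bypass partial transposition entirely: your key observation is that the fixed total particle number forces the polynomials $\mathcal{P}$, $\mathcal{Q}$ in the pure-state characterization (\ref{8}) to be homogeneous of complementary degrees $k$ and $N-k$, so that every pure separable state lies in a single sector ${\cal H}_k$ and is a product vector under the isomorphism of {\sl Remark 3}; block-diagonality of separable mixed states then follows by grouping the terms of the convex decomposition provided by {\sl Corollary 2}. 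The product-set-contained-in-a-line argument is the right way to make the homogeneity claim rigorous; just phrase the index sets as the degrees $j$ with $\mathcal{P}_j|0\rangle\neq 0$ rather than $\mathcal{P}_j\neq 0$ as a formal polynomial (e.g.\ $(a_1^\dagger)^2$ is formally nonzero but annihilates the vacuum) --- this does not affect the argument. For the block-diagonal case the two proofs agree in substance, though you argue through the convex decomposition while the paper argues that mixing an entangled block with orthogonally supported ones cannot restore separability. What each approach buys: the paper's route simultaneously establishes that the partial-transposition criterion is exhaustive for this problem, which it exploits afterwards in the negativity and robustness discussion, but it defers the key computation to an external reference; your route is self-contained within the paper's own {\sl Proposition 1} and {\sl Corollary 2} and is more elementary, at the price of leaning on {\sl Corollary 2}, which the paper itself states without a detailed proof.
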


\noindent
{\sl Proof.} Assume first that the state $\rho$ can not be written in block diagonal form;
using techniques similar to the one adopted in dealing with boson systems \cite{Benatti3},
one can show that it is not left positive by the operation
of partial transposition and therefore it is entangled. Next, take $\rho$
in block diagonal form as in (\ref{28}), (\ref{29}) above. If all its blocks
$\rho_k$ are separable, then clearly $\rho$ itself results separable.
Then, assume that at least one of the diagonal blocks
is entangled. By mixing it with the remaining blocks as in (\ref{28}) will not
spoil its entanglement since all blocks $\rho_k$ have support on orthogonal spaces;
as a consequence, the state $\rho$ results itself non-separable.
\hfill$\Box$

\medskip\noindent
Using this result,
one can now study characteristic properties and geometry of the space $\cal F$ of $N$-fermion states.

The first question that naturally arises concerns the strength 
of the entanglement content of a fermion state $\rho$ 
against mixing with other states.
This is measured by the so-called
``robustness of entanglement'' $R(\rho)$ \cite{Vidal,Steiner,Vidal1,Plenio}:
it is the smallest, non-negative value the parameter $t$ can take so
that the (un-normalized) combination $\rho + t\rho_{\rm sep}$ is separable,
where $\rho_{\rm sep}$ span all separable fermion states.

\begin{proposition}
The robustness of entanglement of a generic $(m, M-m)$-mode
bipartite $N$-fermion state $\rho$ is given by
\begin{equation}
R(\rho)=\sum_{k=0}^N p_k\, R(\rho_k)\ ,
\label{3-2}
\end{equation}
for states that are in block diagonal form as in (\ref{28}), (\ref{29}), while it is infinitely large
otherwise.
\end{proposition}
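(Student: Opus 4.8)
The plan is to treat separately the two alternatives singled out in \emph{Proposition 3}, exploiting the fact that, by that same proposition, every separable state is necessarily block diagonal in the index $k$ (the number of fermions in the first partition), with each diagonal block separable with respect to the tensor structure $\mathbb{C}^{D_k}\otimes\mathbb{C}^{D'_{N-k}}$ of \emph{Remark 3}. This structural constraint on the admissible $\rho_{\rm sep}$ is what drives the whole argument: since the subspaces $\mathcal{H}_k$ carry orthogonal supports, mixing with a separable state can never transfer weight across distinct blocks.

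First I would dispose of the divergent case. Suppose $\rho$ is \emph{not} block diagonal, i.e. it has a nonvanishing coefficient $\rho_{k\sigma\sigma',l\tau\tau'}$ with $k\neq l$. Because any admissible $\rho_{\rm sep}$ is block diagonal, the combination $\rho+t\,\rho_{\rm sep}$ retains that same off-diagonal coefficient for every finite $t\geq0$; hence $\rho+t\,\rho_{\rm sep}$ can never be cast in the form (\ref{28})--(\ref{29}) and, by \emph{Proposition 3}, remains entangled. No finite $t$ renders it separable, so $R(\rho)=\infty$, as claimed.

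For a block-diagonal $\rho=\sum_k p_k\,\rho_k$ I would establish the two inequalities separately. For the upper bound, for each block with $\rho_k$ entangled choose a separable $\sigma_k$ supported in $\mathcal{H}_k$ attaining the robustness, so that $\rho_k+R(\rho_k)\,\sigma_k=(1+R(\rho_k))\,\tau_k$ with $\tau_k$ separable; assembling the block-diagonal separable state $\rho_{\rm sep}=T^{-1}\sum_k p_k R(\rho_k)\,\sigma_k$ with $T=\sum_k p_k R(\rho_k)$ gives $\rho+T\,\rho_{\rm sep}=\sum_k p_k(1+R(\rho_k))\,\tau_k$, again block diagonal with separable blocks, hence separable by \emph{Proposition 3}. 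Therefore $R(\rho)\leq T$ (the case $T=0$ being trivial, as then $\rho$ is itself separable).

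The lower bound is where the real work lies. Take any separable $\rho_{\rm sep}$ with $\rho+t\,\rho_{\rm sep}$ separable and decompose it by blocks, $\rho_{\rm sep}=\sum_k q_k\,\sigma_k$ with $\sigma_k$ separable on $\mathcal{H}_k$, $q_k\geq0$, $\sum_k q_k=1$. By \emph{Proposition 3} the separable state $\rho+t\,\rho_{\rm sep}$ is block diagonal with separable blocks, so its $k$-th block $p_k\rho_k+t q_k\sigma_k$ is itself a positive multiple of a separable state; equivalently $\rho_k+\tfrac{t q_k}{p_k}\sigma_k$ is separable, and the very definition of $R(\rho_k)$ forces $t q_k\geq p_k R(\rho_k)$ for every $k$. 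Summing and using $\sum_k q_k=1$ yields $t\geq\sum_k p_k R(\rho_k)=T$, whence $R(\rho)\geq T$ and the two bounds coincide. The main obstacle I anticipate is precisely the justification that a separable decomposition of the whole state must respect the block structure, so that robustness can legitimately be applied block by block; this rests on the orthogonality of the supports $\mathcal{H}_k$ together with \emph{Proposition 3}, and some care is needed with the degenerate cases $p_k=0$ or $q_k=0$, where the corresponding constraint becomes either vacuous or forces $\rho_k$ to be separable.
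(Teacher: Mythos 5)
Your proposal is correct and follows essentially the route the paper intends: the divergence for non-block-diagonal states is exactly the paper's own argument (a block-diagonal $\rho_{\rm sep}$ can never cancel an off-diagonal coefficient), and the block-diagonal case is handled by the same orthogonal-support decomposition underlying the bosonic proof the paper defers to, with the two inequalities you spell out. The only points left implicit are standard: attainment of the optimal $\sigma_k$ (compactness of the separable set in finite dimension) and the degenerate weights $p_k=0$, $q_k=0$, which you already flag.
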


\noindent
The proof is very similar to the one given in \cite{Benatti4} for bosonic states, so that
it will not be repeated here. Nevertheless, it is worth stressing that, as in the case of bosons, 
fermion entanglement is in general much more robust than the one found 
in systems of distinguishable particles.
Indeed, recalling the previous
{\it Proposition 3}, one has that separable $N$-fermion states
must necessarily be in block diagonal form. If the state $\rho$ is not in this form, it can
never be made block diagonal by mixing it with any separable one; 
therefore, in this case, the combination 
$\rho + t\, \rho_{\rm sep}$ will never be separable, unless $t$ is infinitely large,
giving infinite robustness to the entangled state $\rho$.

A similar argument allows to conclude that the structure of the space $\cal S$ of
separable fermion states is rather special: there always exist small perturbations 
of separable, necessarily block diagonal, states that
make them not block diagonal, hence entangled.
This result should be compared to the one valid
in the case of distinguishable
particles, where instead almost all separable states remain separable under sufficiently small
arbitrary perturbations \cite{Bengtsson,Bandyopadhyay}.

Among the separable $N$-fermion states, the totally mixed one,
\begin{equation}
\rho_{\rm mix}={1\over D}\sum_{k=N_-}^{N_+}\ \sum_{\sigma,\sigma'}\
| k, \sigma; N-k, \sigma'\rangle \langle k, \sigma; N-k, \sigma' |\ ,
\label{4-7}
\end{equation}
stands out because of its special properties. First of all,
it lays on the border of the space $\cal S$ of separable states, since
in its vicinity one can always find non-separable states of the
form $\rho_{\rm mix}+\epsilon\, \rho_{\rm ent}$, $\epsilon >0$,
with $\rho_{\rm ent}$ any entangled state not in block-diagonal form.%
\footnote{On the contrary, recall that in the case
of distinguishable particles, $\rho_{\rm mix}$ always lays in the interior of $\cal S$
\cite{Zyczkowski,Bengtsson}.}
Further, $\rho_{\rm mix}$ is the only state that remains separable
for any choice of bipartition. Indeed, for any state $\rho_{\rm sep}\neq \rho_{\rm mix}$, separable
in a given $M$-mode bipartition, it is possible to find a
unitary Bogolubov transformation, defining a new $M$-mode bipartition, 
that maps it into an entangled one \cite{Benatti4}. Only the state proportional to the unit matrix
remains unchanged under any unitary transformation.

These results allows a rather precise description of the geometrical structure
of the space $\cal F$ of $N$-fermion states. As discussed above,
by fixing a bipartition one selects the set $\cal S$ of separable states,
which forms a subspace of the convex space $\cal F$.
Changing the bipartition through a Bogolubov transformation produces a new separable
subspace, having in general only one point in common with the starting one,
$\rho_{\rm mix}$. Therefore, the global geometrical structure of the state space $\cal F$
presents a sort of star-like shape formed by the various separable bipartition 
subspaces, all sharing just one point,
the totally mixed state.

As a final comment, notice that all above results can be generalized to the case
of systems where the total number of particles is not fixed,
but commutes with all physical observables.%
\footnote{In other terms, we are in
presence of a superselection rule \cite{Bartlett}. For a similar reason, {\it i.e.} the
conservation of the fermion ``charge'',
and in contrast with the boson case, fermion systems with fluctuating number of particles
result unphysical \cite{Wick,Moriya}, so that density matrices that are coherent mixtures of 
states with different $N$ are not admissible.}
In such a situation, a general density matrix $\rho$ can be written as an incoherent mixture 
of states $\rho_N$ with fixed number $N$ of
fermions:
\begin{equation}
\rho=\sum_N \lambda_N \rho_N \ ,\qquad \lambda_N\geq 0\ ,\qquad \sum_N \lambda_N=1\ .
\label{5-1}
\end{equation}
The state $\rho$ is a convex combination of matrices $\rho_N$ having support
on orthogonal spaces, and therefore all previous arguments and results
hold true for each component $\rho_N$.

\section{Applications to quantum metrology}

One of the most promising developments in quantum technology, {\it i.e.} the application
of quantum physics to practical technological realizations, is the possibility of achieving
measurements of physical parameters with unprecedented accuracy.%
\footnote{The literature on the subject is vast; for a partial list, 
see \cite{Caves1}-\cite{Giovannetti2} and references therein.}
In a generic detection
scheme, the parameter $\theta$ to be measured, typically a phase, is encoded into
a state transformation occurring inside a measurement apparatus, generally
an interferometric device. In the most common case of linear setups, this transformation
can be modelled by a unitary mapping, $\rho \to \rho_\theta$, sending the
initial state $\rho$ into the final parameter-dependent outcome state:
\begin{equation}
\rho_\theta= e^{i\theta J}\, \rho\, e^{-i\theta J}\ ,
\label{33}
\end{equation}
where $J$ is the devices-dependent, $\theta$-independent operator generating
the state transformation.
The task of quantum metrology is to determine the ultimate bounds on the accuracy with which
the parameter $\theta$ can be obtained through a measurement of $\rho_\theta$
and to study how these bounds scale with the available resources.

General quantum estimation theory allows a precise determination of
the accuracy $\delta\theta$ with which the phase $\theta$ can be obtained in a measurement involving
the operator $J$ and the initial state $\rho$; one finds that $\delta\theta$ is limited by the
following inequality \cite{Helstrom}-\cite{Paris}:
\begin{equation}
\delta\theta\geq {1\over \sqrt{F[\rho, J]}}\ ,
\label{34}
\end{equation}
where the quantity $F[\rho, J]$ is the so-called ``quantum Fisher information''.
It can be defined through the symmetric logarithmic derivative $L$, 
$\partial_\theta\rho_\theta\big\vert_{\theta=0}=(\rho\,L\,+\,L\,\rho)/2=-i\,[J\,,\,\rho]$,
as
\begin{equation}
\label{35}
F[\rho,J]:={\rm tr}\big[\rho\,L^2\big]\ .
\end{equation}
Given a spectral decomposition of the input state, $\rho=\sum_j r_j\,\vert r_j\rangle\langle r_j\vert$, 
one explicitly finds
\begin{equation}
\label{36}
F[\rho,J]=2\,\sum_{i,j\,;\,r_i+r_j\neq0}\frac{(r_i-r_j)^2}{r_i+r_j}\,
\Big|\langle r_i\vert J\,\vert r_j\rangle\Big|^2\ ,
\end{equation}
which explicitly shows that $F[\rho,J]$ is independent from the parameter $\theta$ to be estimated.
Further, the quantum Fisher information is a continuous, convex function
of the state $\rho$, and in general satisfies the inequality \cite{Luo,Braunstein}
\begin{equation}
F[\rho, J]\leq 4\, \Delta^2_{\rho} J\ ,
\label{37}
\end{equation}
where $\Delta^2_{\rho} J\equiv\big[ \langle J^2\,\rangle-\langle J\,\rangle^2\big]$ 
is the variance of the operator $J$ in the state $\rho$, the equality holding
only for pure initial states.

As a consequence of (\ref{34}), better resolution in $\theta$-estimation 
corresponds to a larger quantum Fisher information. Therefore,
once the measuring apparatus is given, {\it i.e.} the operator $J$ is fixed, one can optimize
the precision with which $\theta$ is determined by choosing an initial state $\rho$
that maximizes $F[\rho, J]$.

In the case of devices using a system of $N$ {\sl distinguishable} particles,
it has been shown that for any separable state $\rho_{\rm sep}$
the quantum Fisher information is bounded by $N$ \cite{Smerzi3}:
\begin{equation}
F\big[\rho_{\rm sep}, J\big]\leq N\ .
\label{26}
\end{equation}
This means that by feeding the measuring apparatus with separable initial states, the best 
achievable precision in the determination of the phase shift $\theta$ is bounded in this case
by the so-called shot-noise limit:
\begin{equation}
\delta\theta\geq{1\over\sqrt{N}}\ .
\label{27}
\end{equation}
This is also the best result
attainable using classical ({\it i.e.} non quantum) devices:
the accuracy in the estimation of $\theta$ scales at most with the inverse
square root of the number of available resources.
Instead, quantum equipped metrology allows to go below the shot-noise limit and 
in principle construct a new generation of sensors reaching unprecedented precision.
And indeed, various detection protocols and input states $\rho$ have been proposed,
all able to yield sub shot-noise sensitivities. 
Notice that, in view of the inequality (\ref{26}), these input states need to be entangled.

This conclusion holds when the metrological devices used to estimate the physical
parameter $\theta$ are based on systems of distinguishable particles. When dealing with
{\sl identical} particles, the above statement is not strictly correct and needs to be rephrased.
Indeed, in the case of bosonic systems it has been explicitly shown that sub shot-noise
sensitivities may be obtained also via a non-local
operation acting on separable input states \cite{Benatti1}. 
In other terms, although some sort of non-locality is needed in order to go below the
shot-noise limit, this can be provided by the measuring apparatus itself and
not by the input state $\rho$, that indeed can be separable. This result
has clearly direct experimental relevance, since the preparation of
suitable entangled input state may require in practice a large amount of resources.

When dealing with systems of $N$ fermions, the situation may appear more problematic, since,
due to the exclusion principle, a single mode can accommodate at most one fermion;
therefore, the scaling with $N$ of the sensitivity in the estimation 
of the parameter $\theta$ may worsen when compared to the boson case employing similar resources.
In the case of bosons, a two-mode apparatus, {\it e.g.} a double-well interferometer, 
filled with $N$ particles is sufficient to reach sub shot-noise sensitivities.
Instead, with fermions, a multimode interferometer \cite{Dariano1}-\cite{Cooper} is needed in order to reach
comparable sensitivities.

As an example, consider a system of $N$ fermions in $M$ modes, with $M$ even,
and let us fix the balanced bipartition $(M/2, M/2)$, in which each of the two parts
contain $m=M/2$ modes, taking for simplicity $N\leq m$. As generator
of the unitary transformation $\rho\to\rho_\theta$ inside the measuring apparatus
let us take the following operator:
\begin{equation}
J_x^{(1)}=\frac{1}{2}\sum_{k=1}^m \omega_k\ \Big(a_k^\dagger a_{m+k} + a_{m+k}^\dagger a_k\Big)\ ,
\label{40}
\end{equation}
where $\omega_k$ is a given spectral function, {\it e.g.} $\omega_k \simeq k^p$, with $p$ integer.
The apparatus implementing the above state transformation is clearly non-local
with respect to the chosen bipartition: $e^{i\theta J_x^{(1)}}$ can not be written
as the product $A_1 A_2$ of two components made of operators referring only
to the first, second partition, respectively. It represents a generalized,
multimode beam splitter, and the whole measuring device behaves as
a multimode interferometer. 

Let us feed the interferometer with a pure initial state, $\rho=|\psi\rangle\langle \psi|$,
\begin{equation}
|\psi\rangle=| \underbrace{1,\dots,1}_N,\underbrace{0,\ldots,0}_{m-N}\, ;\ \underbrace{0,\ldots,0}_m\, \rangle
=a_1^\dagger a_2^\dagger\cdots a_N^\dagger |0\rangle\ ,
\label{41}
\end{equation}
where the fermions occupying the first $N$ modes of the first partition; $|\psi\rangle$ is a Fock state
and therefore it is separable, as discussed in the previous section. The quantum Fisher information
can be easily computed since it is now proportional to the variance of $J_x^{(1)}$:
\begin{equation}
F\big[\rho, J_x^{(1)}\big]= 4\Delta^2_{\rho} J_x^{(1)}=\sum_{k=1}^N \omega_k^2\ .
\label{42}
\end{equation}
Unless $\omega_k$ is $k$-independent, $F\big[\rho, J_x^{(1)}\big]$ is larger than $N$ and therefore
the interferometric apparatus can beat the shot-noise limit in $\theta$-estimation, even starting
with a separable state.
Actually, for $\omega_k \simeq k^p$, one gets: $F\big[\rho, J_x^{(1)}\big]\simeq {\cal O}(N^{2p+1})$.

\bigskip
\noindent
{\bf Remark 4:} {\sl i)} Notice that, in this case, it is not the entanglement
of the initial state that help overcoming the shot-noise-limit
in the phase estimation accuracy; rather, it is the non-local character of the
rotations operated by the apparatus on an initially separable state
that allows $\delta\theta$ to be smaller than $1/\sqrt{N}$.

\noindent
{\sl ii)} In the case of systems made of $N$ {\sl distinguishable} particles, 
for a collective operator $J=\sum_{i=1}^N J^{(i)}$, where $J^{(i)}$ are single particle $SU(2)$ rotation generators,
and any state $\rho$,
the following general bound on the quantum Fisher information holds:
\begin{equation}
F\big[\rho, J\big]\leq N^2\ ,
\label{43}
\end{equation}
providing an absolute limit for the accuracy on the parameter estimation: $\delta\theta\geq 1/N$. 
When the equality holds, one reaches the so-called Heisenberg limit, the ultimate sensitivity
allowed by quantum metrology in this case. Instead, in the scenario described above, one can reach
sub-Heisenberg sensitivities. This possibility has been discussed before, using however
non-linear metrology \cite{Luis}-\cite{Hall}, {\it i.e.} for interferometric apparata that can not be described
in terms of single particle rotations. Further, note that the result (\ref{42})
and the ability to go beyond the Heisenberg limit is not a ``geometrical'' phenomena attributable to
a phase accumulation even on empty modes \cite{Dariano1}; rather, it is a genuine quantum effect,
that scales as a function of the number of fermions, the resource available in the measure.

\noindent
{\sl iii)} When dealing with boson systems, an interferometer based on standard beam splitters 
suffices in order to reach sub shot-noise
accuracy in parameter estimation. In such case the relevant operator is $J_x$, obtained from (\ref{40})
by removing the spectral function $\omega_k$; it belongs to an $su(2)$ algebra.
Similarly, also the generalized beam splitter operator (\ref{40}) is part of a Lie algebra, although
infinite dimensional. Indeed, let us define the three operators:
\begin{eqnarray}
\label{44}
&&J_x^{(n)}=\frac{1}{2}\sum_{k=1}^m (\omega_k)^n\ \Big(a_k^\dagger a_{m+k} + a_{m+k}^\dagger a_k\Big)\ ,\\
\label{45}
&&J_y^{(n)}=\frac{1}{2i}\sum_{k=1}^m (\omega_k)^n\ \Big(a_k^\dagger a_{m+k} - a_{m+k}^\dagger a_k\Big)\ ,\\
\label{46}
&&J_z^{(n)}=\frac{1}{2}\sum_{k=1}^m (\omega_k)^n\ \Big(a_k^\dagger a_k - a_{m+k}^\dagger a_{k+m}\Big)\ .
\end{eqnarray}
One easily checks that they satisfies the following commutation relations:
\begin{equation}
\Big[J_i^{(n)}, J_j^{(m)}\Big]=i\varepsilon_{ijk}J_k^{(n+m)}\ ,\qquad i,j,k=x,y,z\ ,\qquad n,m\in\mathbb{Z}\ ,
\label{47}
\end{equation}
defining the $su(2)$ loop-algebra ({\it i.e.} a centerless Kac-Moody algebra) \cite{Fuchs,Francesco}.
\hfill$\Box$

\medskip
In general, some sort of quantum non-locality is nevertheless needed in order attain sub shot-noise accuracy
in phase estimation. This can be most simply appreciated by changing the $M$-mode bipartition of our system of
$N$ fermions. Let us introduce new creation and annihilation operators 
$b_k^\dagger$, $b_k$ through the following Bogolubov transformations:
\begin{equation}
b_k = \frac{a_k + a_{m+k}}{\sqrt2}\ ,\qquad b_{m+k} = \frac{a_k - a_{m+k}}{\sqrt2}\ ,\qquad k=1,2,\ldots, m\ ,
\label{}
\end{equation}
together with the hermitian conjugate ones; 
the new $M$ modes still obey canonical anticommutation relations: $\{b_k,\, b_l^\dagger\}=\delta_{kl}$.
In this new representation, the operator $J_x^{(1)}$ in (\ref{40}) get transformed into $J_z^{(1)}$,
\begin{equation}
J_x^{(1)}\to J_z^{(1)}=\frac{1}{2}\sum_{k=1}^m \omega_k\ \Big(b_k^\dagger b_k - b_{m+k}^\dagger b_{m+k}\Big)\ .
\label{}
\end{equation}
Therefore, choosing again a balanced bipartition, $(M/2,\, M/2)$, in which half of the $b_k$ modes are 
in the first component, while the remaining half in the second one, the unitary transformation
$\rho\to \rho_\theta$ inside the apparatus is now represented by a local operator
\begin{equation}
e^{i\theta\, J_z^{(1)}} = e^{i\theta\, \sum_{k} \omega_k\, b_k^\dagger b_k/2}\ 
e^{-i\theta\, \sum_{k} \omega_k\, b_{m+k}^\dagger b_{m+k}/2}\ .
\label{}
\end{equation}
However, the initial state $|\psi\rangle$ is no longer separable in the new bipartition,
since, in the new language, it results a linear combination of $2^N$ different Fock states:
\begin{equation}
|\psi\rangle = \left(\frac{b_1^\dagger +b_{m+1}^\dagger}{\sqrt2}\right) 
\left(\frac{b_2^\dagger +b_{m+2}^\dagger}{\sqrt2}\right)\cdots 
\left(\frac{b_N^\dagger +b_{N+1}^\dagger}{\sqrt2}\right) |0\rangle\ .
\label{}
\end{equation}
Despite these changes, the value of the quantum Fisher information for the given initial state
and observable is unchanged and still expressed by (\ref{42}),
since it does not depend on the representation used to compute it. Therefore,
if one is able to build an interferometric setup that
can be described in terms of the modes $b_k^\dag$, $b_k$ instead of the original
modes $a_k^\dag$, $a_k$, then the accuracy $\delta\theta$ with which the phase $\theta$
may be determined can still be better than the shot-noise limit. In such a case, the improvement
in sensitivity is due to the
entanglement of the initial state and not to the non-locality of the transformation 
that takes place inside the apparatus.

As a further remark, notice that in practical applications,
instead of using a generalized rotation through operators of the form (\ref{44})-(\ref{46}),
it might be more convenient to implement parameter estimation via the dynamical state
transformation generated by an hamiltonian operator. A generic quadratic hamiltonian for our $N$ fermions system 
can be written in the form
\begin{equation}
H= \sum_{k=1}^M \Omega_k\, a_k^\dagger\, a_k \ ,
\label{}
\end{equation}
with $\Omega_k$ a given energy dispersion relation. In most situations, the dependence on the
parameter $\theta$ to be estimated arises as a proportionality coupling constant
multiplying the hamiltonian, $H_\theta\equiv\theta H$; in this case,
the finite-time dynamical transformation occurring in the system
is described by $e^{-i t H_\theta}$.
This operator is clearly 
local in any $(m, M-m)$ bipartition, since it is the product of $M$ transformations
in the various modes: $e^{-i t H_\theta}=\prod_k e^{-i t\,\theta\, \Omega_k\, a_k^\dagger a_k}$.
For an entangled initial state of the form%
\footnote{This state is the multimode, $N$-fermion generalization of the
state (\ref{6}) considered in Section 2.}
\begin{equation}
|\Phi\rangle=\frac{1}{\sqrt2}\Big(|N;0\rangle+|0;N\rangle\Big)\ ,
\label{}
\end{equation}
where the two states $|N;0\rangle$, $|0;N\rangle$ represents the situation in which the $N$ fermions are
all in the first, second component, respectively, of a generic $(m, M-m)$ bipartition,
the quantum Fisher information can be readily computed:
\begin{equation}
F\Big[|\Phi\rangle\langle\Phi|, H\Big]=\left( \sum_{k=1}^N\big(\Omega_{m+k}-\Omega_k\big)\right)^2\ .
\end{equation}
In the case of a linear dispersion relation, $\Omega_k\sim k$, it reduces to the simple form
$F\big[|\Phi\rangle\langle\Phi|, H\big]=m^2\, N^2$, providing a sub Heisenberg-like sensitivity in
the estimation of the quantity $t \theta$, hence of the parameter $\theta$, once the evolution
time $t$ is fixed. Notice that in $F$ the factor $N^2$ is a genuine quantum effect, while the dependence
on the number of modes is a ``geometrical'' effect due to phase-accumulation among all,
even empty, modes ({\it cf.} {\sl Remark 4, ii)}).

\eject

\section{Outlook}

One of the most important tasks in modern quantum physics is the characterization and quantification
of non-classical correlations, as they allow on one hand the implementation of classically
unavailable protocols in information theory, on the other hand the realization of quantum devices
and sensors outperforming the present available ones. In these developments, 
thank to the recent advances in quantum optics, ultracold and superconducting physics,
many-body systems composed by identical particles are playing a prominent role.

For such systems though, the usually adopted definitions of separability and quantum correlation
are no longer viable since, due to the indistinguishability of the microscopic constituents,
the natural particle Hilbert space decomposition on which these concepts are based
is lost. One should then resort to a more general definition of locality,
no longer given {\it a priori} once for all, rather, based on a choice of a
bipartition (or more in general multipartition) of the operator algebra of ``observables''
of the system. In this framework, a state is declared separable if its expectation value
on all local operators can be written in a product form, {\it i.e.} as a convex combination of products
of local expectations. This new approach to quantum non-locality is valid in all situations
and, in particular, it reduces to the standard one when applied to systems of distinguishable particles.

The physical, characteristic properties of this new, generalized definition of separability,
previously analyzed in a bosonic setting,
has been studied here in the case of fermion systems. We have focused on many-body systems composed of a fixed
number $N$ of fermions that can occupy a given set of different states or modes. 
We stress that this model represents
a very general paradigm, able to describe the behaviour of various different situations
in atomic and condensed matter physics, as those occurring in 
quantum phase transitions and matter interference phenomena.

The treatment of fermion systems require more care than in the boson case because
of the anticommutative character of the basic fermion algebra. As a result, in contrast
to the bosonic case, the notion
of locality for fermion systems is not directly related with that of commutativity;
nevertheless, the intuition that entanglement should
be connected to the presence of non-classical correlations revealed through averages of local operators
turns out to be correct also in this case. As a byproduct, a new entanglement criterion for fermion states 
is obtained. Using this criterion together with the partial transposition one, a complete
classification of entangled $N$-fermion states have been explicitly given.

Similarly to what happens with $N$-boson states, the entanglement contained in $N$-fermion states turns out
to be much more stable against mixing with other states than the one found in systems of distinguishable
particles; this makes many-body systems made of identical constituents even more attractive for use
in quantum technology applications.

In this respect, quantum metrology is the natural context in which
systems of $N$-fermions can be employed to construct quantum devices
that outperform classical ones. Indeed, as discussed in the last section,
multimode fermion quantum interferometers can be used to improve the accuracy in
parameter estimation much beyond the so-called, classical, shot-noise limit,
provided some sort of quantum non-locality is present in the measuring process.
However, this required non-locality 
need not be encoded in the initial $N$-fermion state:
it can be provided by the interferometric apparatus itself, which at this point
can be fed with an initial separable state. As a result, no preliminary, resource consuming,
entanglement operation (like ``squeezing'') on the state entering the apparatus is need in order to 
get sub shot-noise accuracies in parameter estimation. This fact clearly opens new perspectives
in the realization of many-body based quantum sensors
capable of outperform any available apparatus dedicated to
the measurement of ultraweak effects.

Finally, let us briefly consider the case in which the fermionic system
is describable in terms of a set of Majorana, hermitian operators $c_i$,
$i=1,2,\ldots, 2M$, obeying the algebraic relations: $\{ c_i, c_j \}=2\delta_{ij}$.
Clearly, also in this case the set of all polynomials in the operators $c_i$
form an algebra $\cal C$, to which the system observables belong.
The adopted notions of algebraic bipartition, locality and
separability (see {\sl Definition 1, 2, 3}) are very general and can be
applied also to $\cal C$, so that most of the general results obtained
in the case of complex fermion algebras hold also for the hermitian ones.
Nevertheless, the system Hilbert spaces differ in the two cases;
in particular, the Majorana algebra does not admit a Fock representation.
As a consequence, the detailed structure of entangled Majorana states 
differs from that reported in Sect.3, deserving a separate, expanded discussion
that will be reported elsewhere.

\vskip 2cm

\end{document}